\let\csname equation*\endcsname\relax
\let\csname endequation*\endcsname\relax
\newtheorem{theorem}{Theorem}
\newtheorem{lemma}{Lemma}
\def\tr{{\rm tr \,}}
\def\Tr{{\rm Tr \,}}
\def\N{{\mathbb N}}
\def\R{{\mathbb R}}
\def\C{{\mathbb C}}
\def\bA{{\bold A}}
\def\bB{{\bold B}}
\def\bR{{\bold R}}
\def\bm{{\bold m}}
\def\br{{\bold r}}
\def\be{{\bold e}}
\def\bx{{\bold x}}
\def\by{{\bold y}}
\def\bsigma{{\bold \sigma}}
\def\bnull{{\bold 0}}
\def\fs{{\mathfrak{S}}}
\def\ri{{\mathrm{i}}}
\def\rd{{\, \mathrm{d}}}
\def\re{{\mathrm{e}}}
\def\Re{{\mathrm{Re}}}
\def\cB{{\mathcal B}}
\def\cC{{\mathcal C}}
\def\cD{{\mathcal D}}
\def\cE{{\mathcal E}}
\def\cJ{{\mathcal J}}
\def\cM{{\mathcal M}}
\def\cP{{\mathcal P}}
\def\cW{{\mathcal W}}
\def\b1{{\mathds 1}}
\def\bI{{\mathds I}}
\def\spinup{{\uparrow }}
\def\spindown{{\downarrow }}
\newcommand{\bra}{\langle}
\newcommand{\ket}{\rangle}
\newcommand{\xc}{\mathrm{xc}}
\newcommand{\loc}{\mathrm{loc}}
\newcommand{\LSDA}{\mathrm{LSDA}}
\newcommand{\LDA}{\mathrm{LDA}}
\renewcommand{\Im}{\mathrm{Im}}
\begin{document}

\title[Existence of minimizers for Kohn-Sham within the LSDA]{Existence of minimizers for Kohn-Sham within the Local Spin Density Approximation}

\author{David Gontier}

\address{Universit\'e Paris Est, CERMICS (ENPC), INRIA\\
F-77455 Marne-la-Vall\'ee,
France}

\ead{david.gontier@cermics.enpc.fr}



\begin{abstract}
The purpose of this article is to extend the work by Anantharaman and Canc\`es \cite{Anantharaman2009}, and prove the existence of minimizers for the spin-polarized Kohn-Sham model in the presence of a magnetic field within the local spin density approximation. We show that for any magnetic field that vanishes at infinity, the existence of minimizers is ensured for neutral or positively charged systems. The proof relies on classical concentration-compactness techniques.
\end{abstract}



\section{Introduction} 

The density functional theory (DFT) introduced in 1965 by Hohenberg and Kohn \cite{Hohenberg1964} is a very popular tool in modern quantum chemistry. This theory transforms the high-dimensional Schr\"odinger problem into a low-dimensional one, hence computationally solvable. The price to pay is the introduction of the so-called exchange-correlation (xc) energy term, which is unknown. Throughout the literature, several different approximations of this energy can be found. The first successful one, and still broadly used nowadays, was proposed by Kohn and Sham \cite{Kohn1965}, and is called the local density approximation (LDA). The mathematical properties resulting of the Kohn-Sham LDA are still not fully understood. Proving the existence of minimizers is made difficult by the non-convexity of the problem due to the LDA term. Using concentration compactness techniques introduced by Lions \cite{Lions1984}, it has been possible to prove the existence of minimizers in several cases. Le Bris \cite{LeBris1993} proved that for a neutral or positively charged system, the Kohn-Sham problem with LDA exchange-correlation energy admits a minimizer. A similar result was proved by Anantharaman and Canc\`es \cite{Anantharaman2009} for the so-called extended-Kohn-Sham model with LDA exchange-correlation energy. \\
%
%
%
%
%
\indent The purpose of the present article is to extend the result by Anantharaman and Canc\`es to spin-polarized systems, the electrons of the molecular system into consideration being subjected to the electric potential $V$ created by the nuclei, and to an arbitrary external magnetic field $\bB$ that vanishes at infinity. In order to take into account spin effects, we have to resort to spin density functional theory (SDFT). In this theory, all magnetic contributions coming from orbital magnetism (paramagnetic current, spin-orbit coupling,...) are  neglected. Historically, while Kohn and Sham briefly discussed the inclusion of spin effects in their model, the general theory was pioneered by von Barth and Hedin \cite{Barth1972} and is known as the local spin density approximation (LSDA). These authors proposed the following ansatz to transform a spin-unpolarized exchange-correlation energy to a spin-polarized version of it:
\[
	E_{\xc}^{\LSDA}(\rho^+, \rho^-) := \dfrac12 \left[ E_{\xc}^{\LDA} (2 \rho^+) + E^{\LDA}_{\xc} (2 \rho^-) \right],
\]
where $E_{\xc}^{\LDA}$ is the spinless exchange-correlation energy, and $\rho^{+/-}$ are the eigenvalues of the $2 \times 2$ spin density matrix (see Sec. 2 for details).
There are two other major differences between spin-polarized and spin-unpolarized models. First, the ground state of spin-unpolarized models is given by a minimization problem onto the set of electronic densities, while in spin-polarized models, it is given by a minimization problem onto the set of spin density matrices, consisting of $2 \times 2$ hermitian matrices. Finally, the magnetic field adds a Zeeman-type term~$- \mu \int \bB \cdot \bm$ to the energy functional, where $\bm$ is the spin angular momentum density. \\
\indent Due to all those additional difficulties with respect to the spinless case, the fully polarized SDFT has not been very popular until recently. Chemists generally prefer its collinear version (collinear-SDFT), where all the spins are constrained to be orientated along a fixed direction on the whole space. This allows one to work with two scalar fields (one for spin-up, and one for spin-down), instead of fields of hermitian matrices. While this simplification provides very good results, it misses some physical properties (spin dynamics \cite{Sharma2007}, frustrated solids \cite{Bulik2013}, ...). The implementation of the unconstrained (fully polarizable) model appeared with the work of Sandratskii and Guletskii \cite{Sandratskii1986}, and K\"ubler \textit{et al.} \cite{Kubler1988_1,Kubler1988}, and this model is becoming a standard tool nowadays. To the best of our knowledge, no rigorous proof of the existence of solutions has yet been provided for this case.\\
\indent Our result is that, under the same hypotheses as in \cite{Anantharaman2009}, plus some mild conditions on $\bB$, the existence of minimizers is still ensured for neutral or positively charged systems. Whereas the main tools of the proof are similar to those used in \cite{Anantharaman2009}, namely concentration compactness techniques, some adaptations are necessary, in particular to handle the Zeeman term. The structure of the article is as follows. We first recall how to derive the LSDA models, and formulate the main theorem. Then, we break the proof of the theorem into several lemmas, that we prove at the end of the paper.


\section{Derivation of the local spin density approximation models}

We recall how the extended Kohn-Sham models are derived in the spin setting. We start from the Schr\"odinger-Pauli Hamiltonian for $N$-electrons in the Born-Oppenheimer approximation. In atomic units, this operator reads
\[
	H^{\textrm{SP}} (V, \bA) = \sum_{i=1}^N  \dfrac12 \left( - \ri \nabla_i + \bA(\br_i) \right)^2 \bI_2 + \sum_{i=1}^N  V(\br_i)  \bI_2 - \mu \sum_{i=1}^N  \bB(\br_i) \cdot \bsigma_i    + \sum_{1 \le i < j \le N} \dfrac{1}{| \br_i - \br_j |} \bI_2 ,
\]
where $\bI_2$ is the $2 \times 2$ identity matrix,
\begin{equation} \label{V}
	V(\br) = -\sum_{k=1}^M \dfrac{z_k}{| \br - \bR_k| }
\end{equation}
is the electric potential generated by the nuclei,  $\bA$ is the external magnetic vector potential, and $\bB := \nabla \times \bA$ is the external magnetic field. We denote by $\br_i$ (resp. $\bR_k$) the positions of the electrons (resp. nuclei). The charge of the $k$-th nucleus is $z_k \in \N^*$ and $Z := \sum_{k=1}^M z_k$ is the total nuclear charge. We can assume without loss of generality that $\bR_1 = \bnull$. The constant $\mu$ is the Bohr magneton. Its value is $1/2$ in atomic units, but we prefer to keep the notation $\mu$ in the rest of the paper. The term $\bsigma_i$ appearing in the Hamiltonian contains the Pauli matrices acting on the $i$-th spin variable:
\[
	\bsigma_i := \left( \sigma_{xi}, \sigma_{yi}, \sigma_{zi} \right) = \left( \begin{pmatrix} 0 & 1 \\ 1 & 0 \end{pmatrix}_i , \begin{pmatrix} 0 & -\ri \\ \ri & 0 \end{pmatrix}_i, \begin{pmatrix}  1 & 0 \\ 0 & -1 \end{pmatrix}_i     \right).
\]

\indent Although the magnetic field $\bB$ and magnetic vector potential $\bA$ are linked by the relation $\bB = \nabla \times \bA$, it is often preferable to consider them as two independent fields. Indeed, $\bB$ acts on the spin of the electrons, while $\bA$ acts on the spatial component of the spin-orbitals. For instance, would we be interested only in studying orbital effects (\textit{e.g.} paramagnetic current), we would neglect the spin effects. We would then take $\bB = \bnull$ and $\bA \neq \bnull$. Such an approximation leads to the so-called current-density functional theory \cite{Vignale1988}. In this article, we are interested in spin effects. We therefore set $\bA = \bnull$, which amounts to neglecting the paramagnetic currents, while keeping $\bB \neq \bnull$. With this approximation, our Hamiltonian for $N$ electrons reads
\[
	H(V, \bB) = \sum_{i=1}^N -\frac12 \Delta_i + \sum_{i=1}^N V(\br_i) - \mu \sum_{i=1}^N \bB(\br_i) \cdot \bsigma_i + \sum_{1
\le i < j \le N} \dfrac{1}{| \br_i - \br_j|} .
\]
This Hamiltonian acts on the fermionic Hilbert space
\begin{align*}
	\bigwedge_{i=1}^N L^2 (\R^3, \C^2) := & \Big\{ \Psi(\br_1, s_1, \cdots , \br_N, s_n), 
		\ \br_i \in \R^3, s_i \in \{ \spinup, \spindown\}, \\
	& \quad		\sum_{s_1, \cdots s_N \in \{ \spinup , \spindown\}^N} \int_{\R^{3N}} \| \Psi(\br_1, s_1, \cdots ) \|^2 \rd^3 \br_1 \cdots  \rd^3 \br_n < \infty,  \\
		& \quad \forall p \in \fs_N, \quad \Psi(\br_{p(1)}, s_{p(1)}, \cdots ) = \epsilon(p) \Psi(\br_1, s_1, \cdots) \Big\},
\end{align*}
where $\epsilon(p)$ is the parity of the permutation $p$, endowed with the scalar product
\[
	\bra \Psi_1 | \Psi_2 \ket = \sum_{(s_1, \cdots s_N) \in \{ \spinup , \spindown \}^N} \int_{\R^{3N}} \overline{\Psi_1(\br_1, s_1, \cdots)} \Psi_2(\br_1, s_1, \cdots) \rd^3 \br_1 \cdots \rd^3 \br_N.
\]
Its form domain $\bigwedge_{i=1}^N H^1(\R^3, \C^2)$ is defined similarly.

The ground state energy of the system is obtained by solving the minimization problem
\[
	E(V, \bB) := \inf \left\{ \bra \Psi | H(V, \bB) | \Psi \ket, \quad \Psi \in \bigwedge_{i=1}^N H^1 (\R^3, \C^2), \quad \| \Psi \|_{L^2} = 1 \right\}.
\]
In order to convexify the problem, we introduce, for a wave function $\Psi~\in~\bigwedge_{i=1}^N H^1 (\R^3, \C^2)$ satisfying $\| \Psi \|=1$, the $N$-body density matrix
\[
	\Gamma_\Psi := | \Psi \ket \ \bra \Psi |.
\]
The minimization problem can be recast as
\[
	E(V, \bB) = \inf \left\{ \Tr \left(  H(V, \bB) \Gamma \right), \quad \Gamma \in \cW_N \right\}
\]
where $\cW_N$ is the set of pure state $N$-body density matrices defined by
\[
	\cW_N := \left\{ \Gamma_\Psi, \quad \Psi \in \bigwedge_{i=1}^N H^1 (\R^3, \C^2), \quad \| \Psi \|_{L^2} = 1 \right\}.
\]
In this article, we study the extended-Kohn-Sham model based on mixed state $N$-body density matrices, for this problem has better properties mathematically speaking, and allows one to handle more general physical situations as, for instance, positive temperatures. The set $\cM_N$  of mixed state $N$-body density matrices is defined as the convex hull of $\cW_N$. The minimization problem for mixed states reads
\[
	E(V, \bB) := \inf \left\{ \Tr \left(  H(V, \bB) \Gamma \right), \quad \Gamma \in \cM_N \right\}.
\]
Then, for $\Gamma \in \cM_N$, direct calculations lead to
\begin{equation} \label{recast}
	\fl \Tr \left( H(V, \bB) \Gamma \right)= \Tr \left( H(0, \bnull) \Gamma \right) + 
		\int_{\R^3} \tr_{\C^2} \left[ 
		\begin{pmatrix} V - \mu B_z & - \mu B_x + \ri \mu B_y \\ -\mu  B_x - \ri \mu B_y & V + \mu B_z \end{pmatrix}
			\begin{pmatrix} \rho^{\spinup \spinup}_\Gamma & \rho^{\spinup \spindown}_\Gamma \\  \rho^{\spindown \spinup}_\Gamma &  \rho^{\spindown \spindown}_\Gamma \end{pmatrix}
			\right],
\end{equation}
where $\Gamma(\br_1, s_1, \cdots ; \br_1', s_1', \cdots)$ denotes the kernel of $\Gamma$, and, for $\alpha, \beta \in \{ \spinup ,\spindown\}^2$, 
\[
	\rho_\Gamma^{\alpha \beta} (\br) := N \sum_{(s_2, \cdots, s_N) \in \{ \spinup , \spindown\}^{N-1}} \int_{\R^{3(N-1)}} 
		\Gamma(\br, \alpha, \br_2, s_2, \cdots ; \br, \beta, \br_2, s_2, \cdots) \rd^3 \br_2 \cdots \rd^3 \br_N.
\]
In the following, we write
\[
	U := \begin{pmatrix} V - \mu B_z & -\mu B_x + \ri \mu B_y \\ -\mu B_x - \ri \mu B_y & V + \mu B_z \end{pmatrix} \quad \text{and} \quad 
	R_\Gamma := \begin{pmatrix} \rho^{\spinup \spinup}_\Gamma & \rho^{\spinup \spindown}_\Gamma \\  \rho^{\spindown \spinup}_\Gamma &  \rho^{\spindown \spindown}_\Gamma \end{pmatrix} .
\]
This last $2 \times 2$ matrix is called the spin density matrix. Note that when $\bB = \bnull$, one recovers the usual potential energy density $V \rho_\Gamma$ appearing in spin-unpolarized DFT.
Introducing the spin angular momentum density $\bm_\Gamma = \tr_{\C^2} \left[ \bsigma \cdot R_\Gamma\right]$, and the total electronic density $\rho_{\Gamma} = \rho_\Gamma^{\spinup \spinup} + \rho_\Gamma^{\spindown \spindown}$, it holds

\begin{equation} \label{bm}
	\tr_{\C^2} \left[ U R_\Gamma \right] = V \rho_\Gamma - \mu \bB \cdot \bm_\Gamma .
\end{equation}
%

We now apply the constrained search method introduced and studied by Levy \cite{Levy1979}, Valone \cite{Valone1980} and Lieb \cite{Lieb1983}, and write the minimization problem (\ref{recast}) in terms of $R_\Gamma$:
\begin{equation} \label{DFT}
	E(V, \bB) = \inf \left\{ F(R) + \int_{\R^3} \tr_{\C^2} \left[ U R \right], \quad R \in \cJ_N \right\},
\end{equation}
with
\[
	 F(R) := \inf \left\{ \Tr \left[ H(0, \bnull) \Gamma \right], \quad \Gamma \in \cM_N, \quad R_\Gamma = R \right\}.
\]
The set $\cJ_N$ is defined as
\[
	\cJ_N := \left\{ R \in \cM_{2 \times 2}(L^1(\R^3)), \quad \exists \Gamma \in \cM_N, \quad R_\Gamma = R \right\},
\]
where $ \cM_{2 \times 2}(L^1(\R^3))$ is the space of $2 \times 2$ matrices with entries in $L^1(\R^3)$. We recently proved \cite{Gontier2013} the following characterization for $\cJ_N$ in the mixed state setting:
\[
	\hspace{-2em} \cJ_N = \left\{ R \in \cM_{2 \times 2} (L^1(\R^3)), \quad R^* = R, \quad R \ge 0, \quad \int_{\R^3} \tr_{\C^2} \left[ R \right] = N, \quad \sqrt{R} \in \cM_{2 \times 2} \left( H^1(\R^3) \right) \right\} .
\]
\indent As mentioned before, the functional $F$ cannot be straightforwardly evaluated. In order to make this problem practical, we approximate $F$. It is standard since the work of Kohn and Sham \cite{Kohn1965} to approximate this functional by studying a system of non-interacting electrons. For this purpose, we introduce, for a mixed state $\Gamma \in \cM_N$, the $1$-body density matrix
\[
	\gamma_\Gamma := \begin{pmatrix} \gamma_\Gamma^{\spinup \spinup} & \gamma_\Gamma^{\spinup \spindown} \\
		\gamma_\Gamma^{\spindown \spinup} & \gamma_\Gamma^{\spindown \spindown} \end{pmatrix}
\]
where
\[
	\gamma_\Gamma^{\alpha \beta}(\br,\br') := N \sum_{(s_2, s_3, \cdots) \in \{ \spinup , \spindown\}^{N-1}} \int_{\R^{3(N-1)}} 
		\Gamma(\br, \alpha, \br_2, s_2, \cdots ; \br', \beta, \br_2, s_2, \cdots ) \rd^3 \br_2 \cdots \rd^3 \br_N .
\]
The set of mixed-state 1-body density matrices is
\[
	\cP_N := \{ \gamma_\Gamma, \quad \Gamma \in \cM_N \},
\]
and, identifying the kernel $\gamma(\br, \br')$ with the corresponding operator of $\mathcal{S}(L^2(\R^3, \C^2))$, the space of self-adjoint operators on $L^2(\R^3, \C^2)$, Coleman \cite{Coleman1963} proved that
\[
	\cP_N = \left\{ \gamma \in \mathcal{S}(L^2(\R^3, \C^2)), \quad 0 \le \gamma \le 1, \quad \Tr( \gamma) = N, \quad \Tr( - \Delta \gamma) < \infty \right\}.
\]
In a similar way, we can define, for $\lambda > 0$,
\begin{equation} \label{Plambda}
	\cP_\lambda := \left\{ \gamma \in \mathcal{S}(L^2(\R^3, \C^2)), \quad 0 \le \gamma \le 1, \quad \Tr( \gamma) = \lambda, \quad \Tr( - \Delta \gamma) < \infty \right\}.
\end{equation}
A more practical and equivalent formulation of the Coleman result is that, using the spectral theory for compact self-adjoint operators, we can write the components $\gamma^{\alpha \beta}$ of any $\gamma \in \cP_\lambda$ in the form
\begin{align} \label{Coleman}
	\fl \gamma^{\alpha \beta} (\br, \br') & = \sum_{k=1}^\infty n_k \phi_k^\alpha(\br) \overline{\phi_k^\beta(\br')}, 
	 \ 0 \le n_k \le 1, \ \sum_{k=1}^\infty n_k = \lambda, \ \Phi_k = \begin{pmatrix} \phi_k^\spinup \\ \phi_k^\spindown \end{pmatrix} \in L^2(\R^3, \C^2),  \bra \Phi_k | \Phi_l \ket = \delta_{kl}, \nonumber\\
	 & 
	   \Tr( - \Delta \gamma) := \sum_{k=1}^\infty n_k \| \nabla \Phi_k \|_{L^2}^2 = \Tr (- \Delta \gamma^{\spinup \spinup}) + \Tr(- \Delta \gamma^{\spindown \spindown}) < \infty.
\end{align}
Notice that $\gamma_\Gamma ( \br, \br) = R_\Gamma( \br)$, so that we will write $R_\gamma(\br) := \gamma (\br, \br)$ for $\gamma \in \cP_N$. We finally introduce
\[
	J_\lambda	 := \left\{ R \in \cM_{2 \times 2}(L^1(\R^3)), \quad \exists \gamma \in \cP_\lambda, \quad R = R_\gamma \right\}.
\]

\indent The extended version of the Kohn-Sham approach consists then in splitting the unknown functional $F(R)$ into three parts:
\begin{equation} \label{KS_split}
	F(R) =  T_{\textrm{KS}}(R)+ J(R)+ E_{\xc}(R).
\end{equation}
The first term $T_{\textrm{KS}}$ represents the kinetic energy of a non-interacting electronic system. It reads, in the one-body formalism,
\[
	\forall R \in \cJ_\lambda, \quad T_{\textrm{KS}}(R) := \inf \left\{ \dfrac12 \Tr \left( - \Delta \gamma \right) , \quad \gamma \in \cP_\lambda, \quad R_\gamma = R \right\}.
\]
The second term is the Hartree term, defined by
\[
	J(\rho) := \dfrac12 \iint_{\R^3 \times \R^3} \dfrac{\rho(\br) \rho(\br')}{| \br - \br'|} \rd^3 \br \ \rd^3 \br' .
\]
Finally, the last term is the exchange-correlation functional defined by
\[
	\quad E_{\xc}(R)  := F(R) - T_{\textrm{KS}}(R) - J(R).
\]
Notice that because $F$ is a non-explicit functional, $E_{\xc}$ is also a non-explicit functional. However, the purpose of splitting $F$ according to (\ref{KS_split}) is that $E_\xc$ is an order of magnitude smaller that $F$. We can gain another order of magnitude in accuracy with respect to the reduced Hartree-Fock model \cite{Solovej1991} (where $E_{\xc} = 0$) with a good approximation of the functional $E_{\xc}$.\\
\indent The local-spin density approximation introduced by von Barth and Hedin \cite{Barth1972} consists in writing
\begin{equation} \label{vonBarth}
	E_{\xc}(R) \approx E_{\xc}^{\LSDA}(\rho^+, \rho^-) := \dfrac12 \left[ E_{\xc}^{\LDA} (2 \rho^+) + E^{\LDA}_{\xc} (2 \rho^-) \right]
\end{equation}
where $\rho^{+/-}$ are the two eigenvalues of the $2 \times 2$ matrix $R$, and $E_{\xc}^{\LDA}$ is the standard exchange-correlation functional in the non-polarized case, that we can write under the form \cite{Kohn1965}
\begin{equation} \label{g}
	E_{\xc}^\LDA (\rho) = \int_{\R^3} g(\rho (\br)) \rd^3 \br .
\end{equation}
We emphasize that the polarization rule (\ref{vonBarth}) is exact for the exchange part of the exchange-correlation energy, and that von Barth and Hedin proposed to use the same formula for the correlation part. The fact that $E_{\xc}^\LSDA$ only depends on  $R$ via its eigenvalues comes from the locality of the functional. Indeed, this energy functional must be invariant with respect to local spin rotations. Because $R$ is hermitian at each point, we can always diagonalize $R$ locally, so that a local energy can only depend on the two eigenvalues of $R$.\\

In this article, we will deal with exchange-correlation functionals of the form (\ref{vonBarth})-(\ref{g}). The mathematical properties of the standard LDA exchange-correlation functional are similar \cite{Perdew1981} to the one of the $X\alpha$-functional introduced by Slater \cite{Slater1951}
\[
	E_{\xc}^{\LDA, X\alpha} (\rho) = - C_X \int_{\R^3} \rho^{4/3}(\br) \rd^3 \br.
\]

Altogether, by recasting problem (\ref{DFT}) in terms of the one-body density matrices, we end up with a variational problem of the form
\begin{equation} \label{I_lambda}
	I_\lambda := \inf \left\{ \cE(\gamma), \quad \gamma \in \cP_\lambda \right\}
\end{equation}
where
\[
	\cE(\gamma) = \dfrac12 \Tr \left( - \Delta \gamma^{\spinup \spinup} \right) +  \dfrac12 \Tr \left( - \Delta \gamma^{\spindown \spindown} \right) + J(\rho_\gamma) + \int_{\R^3} \tr_{\C^2} \left[ U R_\gamma  \right] \rd^3 \br +E_{\xc}^\LSDA(\rho_\gamma^+, \rho_\gamma^-)
\]
and where $P_\lambda$ has been defined in (\ref{Plambda}). The physical situation corresponds to $\lambda = N \in \N$, but as usual in variational problems set on the whole space, it is useful to relax the constraint $\Tr(\gamma) = N$ to allow the particles to escape to infinity.

We can recover some other common models by further constraining the minimization set. For instance, the collinear-SDFT consists in minimizing the functional $\cE$ onto the set
\[
	\cP_\lambda^{\textrm{collinear}} := \left\{ \gamma \in \cP_\lambda, \quad \gamma^{\spinup \spindown} = \gamma^{\spindown \spinup} = 0 \right\}.
\]
In this case, the matrices $\gamma$ and $R$ are both diagonal. In particular, the two eigenvalues of $R$ are $\{ \rho^+, \rho^- \} = \{ \rho^{\spinup \spinup}, \rho^{\spindown \spindown} \}$. In this model, it holds that
\[
	\int_{\R^3} \tr_{\C^2} \left[ U R \right] = \int_{\R^3} V (\rho^{\spinup \spinup} + \rho^{\spindown \spindown}) - \mu \int_{\R^3} B_z (\rho^{\spinup \spinup} - \rho^{\spindown \spindown}) = \int_{\R^3} V \rho - \mu \int_{\R^3} B_z \rho\  \zeta.
\]
where
\[
	\zeta := \dfrac{\rho^{\spinup \spinup} - \rho^{\spindown \spindown}}{\rho^{\spinup \spinup} + \rho^{\spindown \spindown}} \in [-1, 1]
\]
is the relative spin-polarization. This model is therefore simpler than the non-collinear spin-polarized model, as we are not dealing with fields of matrices, but with two scalar fields. Physically, it corresponds to constraining the spin along a fixed direction on the whole space. This method provides results in good agreement with experiments whenever the energy accounting for the non-collinearity of the spins is negligible.


Then, the unpolarized case consists in minimizing the functional $\cE$ onto the set
\[
	\cP_\lambda^{\textrm{unpolarized}} := \left\{ \gamma \in \cP_\lambda, \quad \gamma^{\spinup \spindown} = \gamma^{\spindown \spinup} = 0, \quad \gamma^{\spinup \spinup} = \gamma^{\spindown \spindown} \right\}.
\]
Equivalently, it corresponds to the collinear case with $\zeta \equiv 0$. It then holds that
\[
	\int_{\R^3} \tr_{\C^2} \left[ U R \right] = \int_{\R^3} V \rho ,
\]
so that the model is independent of the magnetic field $\bB$, and can be used whenever spin effects are negligible.  We refer to \cite{Anantharaman2009} for a mathematical introduction of this model.


\section{An existence result for the Kohn-Sham LSDA model}

The main result of this article is the following
\begin{theorem}
	Under the following assumptions
	\begin{itemize}
		\item[1/]  the function $g$ in (\ref{g}) is of class $\cC^1(\R^+)$ and satisfies:
		\begin{equation} \label{cond_g}
			\left\{ \begin{aligned}
			& g(0) = 0 \\
			& g' \le 0 \\
			& \exists \ 0 < \beta^- \le \beta^+ < \frac23, \quad \sup_{\rho \in \R^+} \dfrac{|g'(\rho)|}{\rho^{\beta^-} + \rho^{\beta^+}} < \infty\\
			& \exists \ 1 \le \alpha < \dfrac32, \quad \limsup_{\rho \to 0^+} \dfrac{g(\rho)}{\rho^{\alpha}} < 0,
			\end{aligned}
			\right.
		\end{equation}
		\item[2/] all entries of $U$ are in $L^{\frac32+\epsilon}(\R^3) + L^{\infty}(\R^3)$ and vanish at infinity, and $V := \tr_{\C^2}(U)$ has the form (\ref{V}),
	\end{itemize}
	the problem $I_\lambda$ defined in (\ref{I_lambda}) has a minimizer whenever $\lambda \le Z$.
\end{theorem}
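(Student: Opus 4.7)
The plan is to adapt the concentration-compactness strategy of Anantharaman-Cancès to the spin-polarized setting. The first step is coercivity of $\cE$ on $\cP_\lambda$. Since $\beta^+ < 2/3$ in the growth bound on $g'$, one has $|g(\rho)| \lesssim \rho + \rho^{1+\beta^+}$ with $1+\beta^+ < 5/3$, so by the Lieb-Thirring inequality applied to the eigenvalues $\rho^\pm$ of $R_\gamma$, the LSDA xc term is bounded by $\epsilon\Tr(-\Delta \gamma) + C_\epsilon$. Since $V$ is $(-\Delta)$-form bounded with relative bound zero and each entry of $\bB$ belongs to $L^{3/2+\epsilon}+L^{\infty}$, the pointwise bound $|\bm_\gamma| \le \rho_\gamma$ combined with Hölder-Sobolev estimates controls the Zeeman term in the same fashion. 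This gives $\cE(\gamma) \ge \tfrac14 \Tr(-\Delta\gamma) - C$, so any minimizing sequence $(\gamma_n)$ has bounded kinetic energy. Using the Coleman representation (\ref{Coleman}) one extracts a subsequence and a limit $\gamma \in \cP_{\lambda'}$, $\lambda'\le\lambda$, with $\sqrt{R_{\gamma_n}} \rightharpoonup \sqrt{R_\gamma}$ in $H^1(\R^3, \cM_{2\times 2})$.

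Next I apply Lions' concentration-compactness lemma to the scalar density $\rho_n := \rho_{\gamma_n}$. Vanishing is excluded by exhibiting a trial state localized near one nucleus with $\cE < 0$ (the attractive Coulomb term dominates at leading order while Hartree, Zeeman and xc are of higher order thanks to $\alpha \ge 1$): under vanishing, the potential, Zeeman and xc contributions all tend to zero along the sequence, leaving a non-negative liminf, which contradicts $I_\lambda < 0$. The crucial step is ruling out dichotomy by means of the strict binding inequality
\[
	I_\lambda < I_{\lambda_1} + I^{\infty}_{\lambda-\lambda_1} \qquad \text{for every } 0 < \lambda_1 < \lambda,
\]
where $I^\infty_\mu$ denotes the same problem with $V$ and $\bB$ set to zero. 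The inequality $\le$ is obtained by translating competitors to infinity; strictness is proved by perturbing a would-be minimizer of the right-hand side and bringing an infinitesimal mass $\delta$ back near the nuclei. The energy gain is of order $-\delta\,(Z-\lambda_1) + o(\delta)$ from the Coulomb attraction, while the xc cost is $o(\delta)$ thanks to $\alpha < 3/2$ in (\ref{cond_g}), and the hypothesis $\lambda_1 < \lambda \le Z$ ensures that the residual nuclear charge is still attractive. This is the main technical obstacle: the competitor must be constructed at the level of $\gamma$, not of $R$, to remain inside $\cJ_\mu$, and the non-convexity of $E_\xc^{\LSDA}$ through the eigenvalues of $R$ forces careful pointwise diagonalizations when combining two spin density matrices supported far apart.

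Once vanishing and dichotomy are excluded, compactness provides tightness of $\rho_n$, hence strong $L^p$ convergence of $\rho_n$ to $\rho_\gamma$ for $p \in [1,3)$ and therefore $\Tr(\gamma) = \lambda$. I then pass to the liminf term by term: the kinetic and Hartree pieces by weak lower semicontinuity; the potential $\int \tr_{\C^2}[U R_{\gamma_n}]$ converges because each entry of $U$ splits into an $L^{3/2+\epsilon}$ piece absorbed by strong $L^{3}$ convergence of $R_{\gamma_n}$ and an $L^\infty$ piece vanishing at infinity, which is handled by tightness and dominated convergence; finally $E_\xc^{\LSDA}(\rho_n^+,\rho_n^-)$ converges using continuity of the eigenvalue map on Hermitian $2\times 2$ matrices, strong local convergence of $\sqrt{R_{\gamma_n}}$, and Vitali's theorem via the growth bound on $g$. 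This yields $\cE(\gamma) \le \liminf \cE(\gamma_n) = I_\lambda$, so $\gamma$ is a minimizer.
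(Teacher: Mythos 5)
Your outline follows the broad concentration-compactness template, but it misses the paper's central technical device and leaves the most delicate step — strict binding — essentially unproved.

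\textbf{The flip transformation is missing.} The main difficulty introduced by the Zeeman term $-\mu\int\bB\cdot\bm$ is that its sign is not controlled by $\rho$ alone, so it obstructs all the comparison arguments (proving $I_\lambda<I_\lambda^\infty$, proving subadditivity, and proving strict binding). Your coercivity step handles it via $|\bm_\gamma|\le\rho_\gamma$, which is fine for a lower bound, but coercivity is not where the trouble lies. The paper introduces the spin-flip map $\Phi\mapsto\widetilde\Phi=(\overline{\phi^\spindown},-\overline{\phi^\spinup})^T$, $\gamma\mapsto\widetilde\gamma$, which preserves $\cP_\lambda$, the kinetic energy, $\rho$, and $\rho^{\pm}$, while reversing $\bm\mapsto-\bm$, so that $\tr_{\C^2}[UR]+\tr_{\C^2}[U\widetilde R]=2V\rho$. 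Averaging the energy over $\gamma$ and $\widetilde\gamma$ cancels the Zeeman contribution exactly, which is precisely what allows the Coulomb attraction $-z_1/|\br|$ to yield a strictly negative gain in Lemma 5(ii) and Lemma 8. Without some substitute for this cancellation, the steps where you claim ``the Zeeman contribution is of higher order'' or is absorbed do not go through: the Zeeman term is of the same order as $\int V\rho$ and can a priori have either sign.

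\textbf{The strict binding step is not established.} You assert $I_\lambda<I_{\lambda_1}+I^\infty_{\lambda-\lambda_1}$ for \emph{every} $0<\lambda_1<\lambda$, and propose to prove it by ``perturbing a would-be minimizer of the right-hand side''. This presupposes existence of minimizers for $I_{\lambda_1}$ and $I^\infty_{\lambda-\lambda_1}$, which is exactly what is unknown a priori. The paper's route is different and this is essential: Lemma~6 shows that if the weak limit $\gamma_0$ has mass $\alpha<\lambda$ then $\gamma_0$ \emph{is} a minimizer for $I_\alpha$, and a suitably translated remainder produces a minimizer of some $I^\infty_\beta$ with $\alpha+\beta\le\lambda$ and $I_\lambda=I_\alpha+I^\infty_\beta+I^\infty_{\lambda-\alpha-\beta}$. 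Only then can one write Euler--Lagrange equations (Lemma~7), show $H_{\gamma_0}$ has infinitely many negative eigenvalues when $\lambda<Z$ and that the occupied orbitals decay exponentially, and finally (Lemma~8) combine the two minimizers (with exponentially decaying orbitals, again using the flip to kill the Zeeman cross-term) to obtain $I_{\alpha+\beta}<I_\alpha+I^\infty_\beta$, contradicting subadditivity. Your sketch also cites a gain ``$-\delta(Z-\lambda_1)$'' as if it comes from moving mass to the nuclei, but in the paper the gain is either the $1/n$ Coulomb tail between the two exponentially localized clusters or the strictly negative eigenvalue $\epsilon_{N_2+1}$; both require the spectral analysis you omit. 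As it stands, the dichotomy-exclusion in your proposal is a statement of intent rather than a proof.

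The remainder of your outline (coercivity, weak lower semicontinuity term by term, continuity of the eigenvalue map for the xc functional) is consistent with the paper's Lemmas~1--3 and is fine.
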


\noindent \textbf{Remark 1.} The assumptions (\ref{cond_g}) are the same as in \cite{Anantharaman2009}. What is added in this article is the introduction of a magnetic field. 

\noindent \textbf{Remark 2.} This result does not make any assumption on the strength of the magnetic field $\bB$ other than that it vanishes at infinity. If $\bB$ becomes infinite at infinity, it is easy to see that the energy is not bounded below: we can orientate the spins of all electrons along the magnetic field and push them to infinity, so that the energy can be arbitrarily negative. \\



\noindent\textit{Proof of Theorem 1:}\\
We use the concentration-compactness method introduced in \cite{Lions1984}. We therefore introduce the problem at infinity 
\[
	I_\lambda^\infty = \inf \left\{ \cE^\infty(\gamma), \quad \gamma \in \cP_\lambda \right\},
\]
where
\[
	\cE^\infty(\gamma) := \dfrac12 \Tr \left( - \Delta \gamma^{\spinup \spinup} \right) +  \dfrac12 \Tr \left( - \Delta \gamma^{\spindown \spindown} \right) + J(\rho) + E_{\xc}^\LSDA(\rho^+, \rho^-).
\]

We will need several lemmas, the proofs of which are postponed until the following section for the sake of clarity. We begin with some functional inequalities:


\begin{lemma} \label{lemma1}
	There exists a constant $C$ such that for all $\lambda > 0$ and all $\gamma \in \cP_\lambda$, it holds
	\[
		\| \nabla R_\gamma \|_{L^{3/2}} \le C \Tr( - \Delta \gamma)
		 \quad \text{and} \quad \| \nabla \rho_\gamma^{+/-} \|_{L^{3/2}} \le C \Tr( - \Delta \gamma).
	\]
	In particular, for all $1 \le p \le 3$, there exists $C_p$ such that, for all $\lambda > 0$ and all $\gamma \in \cP_\lambda$,
	\begin{equation} \label{eq:lemma1}
		\| R_\gamma \|_{L^p} \le C_p \lambda^{\frac{3-p}{2p}} \Tr (- \Delta \gamma)^{\frac{3(p-1)}{2p}} ,
	\end{equation}
	and similarly for $\rho_\gamma^{+/-}$.
\end{lemma}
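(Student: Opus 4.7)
My plan is to use the Coleman decomposition (\ref{Coleman}) of $\gamma$ to reduce each estimate to a pointwise Cauchy--Schwarz bound, and then close the argument with the Hoffmann--Ostenhof inequality and the Sobolev embedding in $\R^3$.

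First, writing $R_\gamma^{\alpha\beta}(\br) = \sum_k n_k \phi_k^\alpha(\br)\overline{\phi_k^\beta(\br)}$ gives
\[
\nabla R_\gamma^{\alpha\beta} = \sum_k n_k\bigl(\nabla\phi_k^\alpha\,\overline{\phi_k^\beta} + \phi_k^\alpha\,\overline{\nabla\phi_k^\beta}\bigr).
\]
Introducing the kinetic energy density $\tau_\gamma(\br) := \sum_k n_k |\nabla\Phi_k(\br)|^2$, whose integral equals $\Tr(-\Delta\gamma)$, Cauchy--Schwarz on the sum over $k$ yields, for every $(\alpha,\beta) \in \{\spinup,\spindown\}^2$, the pointwise bound
\[
|\nabla R_\gamma^{\alpha\beta}(\br)| \le 2\sqrt{\rho_\gamma(\br)\,\tau_\gamma(\br)},
\]
since $\sum_k n_k|\phi_k^\beta|^2 \le \rho_\gamma$ and $\sum_k n_k|\nabla\phi_k^\alpha|^2 \le \tau_\gamma$. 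The same Cauchy--Schwarz applied to $\nabla\rho_\gamma$ yields the Hoffmann--Ostenhof-type identity $|\nabla\sqrt{\rho_\gamma}|^2 \le \tau_\gamma$ and, via Sobolev, the crucial $L^3$ control $\|\rho_\gamma\|_{L^3} \le C\,\Tr(-\Delta\gamma)$.

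Combining these, H\"older's inequality with exponents $(4/3,4)$ applied to $\rho_\gamma^{3/4}\tau_\gamma^{3/4}$ gives
\[
\int|\nabla R_\gamma^{\alpha\beta}|^{3/2} \le 2^{3/2}\|\rho_\gamma\|_{L^3}^{3/4}(\Tr(-\Delta\gamma))^{3/4} \le C(\Tr(-\Delta\gamma))^{3/2},
\]
which is the desired bound on $\|\nabla R_\gamma\|_{L^{3/2}}$. The corresponding bound on $\|\nabla\rho_\gamma^\pm\|_{L^{3/2}}$ follows from Weyl's perturbation inequality: the map sending a $2\times 2$ Hermitian matrix to either of its ordered eigenvalues is $1$-Lipschitz with respect to the Frobenius norm, so by Rademacher and the chain rule for Lipschitz functions, $|\nabla\rho_\gamma^\pm(\br)| \le \|\nabla R_\gamma(\br)\|_F$ almost everywhere, and the same estimate transfers.

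For the interpolation inequality~(\ref{eq:lemma1}), I would use that $R_\gamma(\br)$ is pointwise positive semi-definite of trace $\rho_\gamma(\br)$, giving $\|R_\gamma(\br)\| \le C\,\rho_\gamma(\br)$ for any reasonable matrix norm (as every entry of a PSD matrix is bounded in modulus by the trace), and likewise $0 \le \rho_\gamma^\pm(\br) \le \rho_\gamma(\br)$. Since $\|\rho_\gamma\|_{L^1} = \lambda$ and $\|\rho_\gamma\|_{L^3} \le C\,\Tr(-\Delta\gamma)$, H\"older interpolation with the convex combination $\tfrac{1}{p} = \tfrac{3-p}{2p}\cdot 1 + \tfrac{3(p-1)}{2p}\cdot\tfrac{1}{3}$ yields the claim. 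The main point requiring care is the bound on the off-diagonal entries $\nabla R_\gamma^{\spinup\spindown}$: unlike the diagonal ones, these are complex-valued, so the usual scalar identity $|\nabla\rho|^2 \le 4\rho\tau$ is not directly available; the Cauchy--Schwarz argument on the sum in $k$ nevertheless closes because it majorizes $|\nabla R_\gamma^{\spinup\spindown}|$ by the \emph{full} density $\rho_\gamma$ and \emph{full} kinetic density $\tau_\gamma$, rather than the individual spin components, which is exactly what is needed to reuse the Hoffmann--Ostenhof $L^3$ bound on $\rho_\gamma$.
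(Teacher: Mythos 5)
Your proof is correct and follows essentially the same skeleton as the paper's: Coleman decomposition, a pointwise Cauchy--Schwarz estimate of the form (\ref{nabla_rho}), the Hoffmann--Ostenhof/Sobolev bound $\|\rho_\gamma\|_{L^3}\le C\,\Tr(-\Delta\gamma)$, and a H\"older step to close the $L^{3/2}$ gradient estimate. The one genuinely different step is the gradient bound on the eigenvalues $\rho_\gamma^{\pm}$. The paper differentiates the explicit $2\times 2$ formula (\ref{rhopm}) and handles the square-root term via the elementary inequality $|\nabla\sqrt{f+g}| \le |\nabla\sqrt f| + |\nabla\sqrt g|$. You instead invoke Weyl's perturbation inequality (the ordered-eigenvalue map on Hermitian matrices is $1$-Lipschitz in the Frobenius norm) together with the chain rule for a Lipschitz function composed with a Sobolev map. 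Your argument is shorter and more conceptual, and would carry over verbatim to spin density matrices of any fixed dimension, where no closed-form eigenvalue expression is available; the paper's version is completely explicit, which it then reuses to write down the mean-field Hamiltonian (\ref{Hgamma}). For the interpolation (\ref{eq:lemma1}), you bypass the paper's detour through the critical Sobolev embedding $W^{1,3/2}(\R^3)\hookrightarrow L^3(\R^3)$ by using the pointwise dominations $|\rho_\gamma^{\alpha\beta}|\le\rho_\gamma$ and $0\le\rho_\gamma^{\pm}\le\rho_\gamma$ (which the paper records in (\ref{MajorationRho})) to transfer the $L^1$--$L^3$ control on $\rho_\gamma$ directly. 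Both routes are correct and give the same exponents.
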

We easily deduce from the above lemma that the energies $I_\lambda$ and $I_\lambda^\infty$ are bounded below:
\begin{lemma} \label{lemma2}
	For all $\lambda > 0$, we have $I_\lambda > -\infty$ and $I_\lambda^\infty > - \infty$. Moreover, all minimizing sequences $(\gamma_n)$ for $I_\lambda$ or $I_\lambda^\infty$ are bounded in the Banach space $\cB$, where
\[
	\cB := \{ \gamma \in \mathcal{S}(L^2(\R^3, \C^2)), \quad \| \gamma \|_\cB := \Tr( | \gamma |) + \Tr (| |\nabla | \gamma | \nabla | |)  < \infty \} .
\]
\end{lemma}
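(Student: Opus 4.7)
The plan is to bound each of the non-kinetic summands in $\cE(\gamma)$ separately and to show that every negative contribution can be absorbed by the kinetic energy $\tfrac12 \Tr(-\Delta \gamma)$ via Young's inequality. The Hartree term $J(\rho_\gamma)$ is nonnegative by definition, so only the potential term $\int_{\R^3} \tr_{\C^2}[U R_\gamma]$ and the exchange-correlation term $E_\xc^\LSDA(\rho_\gamma^+, \rho_\gamma^-)$ need careful treatment.

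For the potential contribution, I would decompose each entry as $U = U_1 + U_2$ with $U_1 \in L^{3/2+\epsilon}(\R^3)$ and $U_2 \in L^\infty(\R^3)$, and apply H\"older's inequality:
\[
	\left|\int_{\R^3} \tr_{\C^2}[U R_\gamma]\right| \le C \|U_1\|_{L^{3/2+\epsilon}} \|R_\gamma\|_{L^q} + C\|U_2\|_{L^\infty}\, \lambda,
\]
where $q = (3+2\epsilon)/(1+2\epsilon) \in (1,3)$ is the H\"older conjugate of $3/2+\epsilon$. Lemma \ref{lemma1} applied with $p=q$ then yields a bound of the form $C_\lambda\, \Tr(-\Delta \gamma)^{3/(3+2\epsilon)} + C_\lambda'$, whose exponent is strictly below $1$.

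For the LSDA term, integrating the hypothesis $|g'(\rho)| \le C(\rho^{\beta^-} + \rho^{\beta^+})$ together with $g(0)=0$ gives the pointwise bound $|g(\rho)| \le C(\rho^{\beta^-+1} + \rho^{\beta^++1})$, hence
\[
	|E_\xc^\LSDA(\rho_\gamma^+,\rho_\gamma^-)| \le C \sum_\pm \left(\|\rho_\gamma^\pm\|_{L^{\beta^-+1}}^{\beta^-+1} + \|\rho_\gamma^\pm\|_{L^{\beta^++1}}^{\beta^++1}\right).
\]
Since $\beta^\pm + 1 \in (1, 5/3) \subset (1,3)$, Lemma \ref{lemma1} applied to the eigenvalues $\rho_\gamma^\pm$ yields a bound of order $\Tr(-\Delta\gamma)^{3\beta^\pm/2}$, and the exponent $3\beta^\pm/2$ is again strictly less than $1$ because $\beta^\pm < 2/3$.

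Combining these estimates and invoking Young's inequality $a^s \le \eta\, a + C(\eta, s)$ (valid for $0 \le s < 1$, $a \ge 0$), one absorbs all sublinear powers of $\Tr(-\Delta\gamma)$ into $\tfrac12\Tr(-\Delta \gamma)$, obtaining
\[
	\cE(\gamma) \ge \tfrac14 \Tr(-\Delta \gamma) - C_\lambda.
\]
This immediately gives $I_\lambda > -\infty$ and shows that any minimizing sequence has bounded kinetic energy; together with the fixed constraint $\Tr(\gamma) = \lambda$, this yields boundedness in $\cB$. The problem at infinity $I_\lambda^\infty$ is handled identically, and in fact more easily, since the potential term is absent. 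There is no real obstacle here: the only delicate point is to verify that every subcritical exponent produced above is strictly smaller than $1$, which is precisely where the strict inequalities $\beta^+ < 2/3$ and $\epsilon > 0$ in the hypotheses intervene.
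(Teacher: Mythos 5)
Your proposal is correct and follows essentially the same route as the paper: bound the exchange-correlation term using the pointwise estimate $|g(\rho)|\le C(\rho^{1+\beta^-}+\rho^{1+\beta^+})$ derived from $g(0)=0$ and the hypothesis on $g'$, bound the potential term by H\"older against the $L^{3/2+\epsilon}+L^\infty$ decomposition of $U$, and then absorb the resulting sublinear powers of $\Tr(-\Delta\gamma)$ into the kinetic term. The paper phrases the final absorption step as a coercivity statement for $Y\mapsto Y-\widetilde C_1(1+Y^{\alpha_1})-C_2 Y^{\alpha_2}-C_2 Y^{\alpha_3}$ rather than invoking Young's inequality explicitly, and does not spell out the exponents $3/(3+2\epsilon)$ and $3\beta^\pm/2$ as you do, but this is purely cosmetic.
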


In the following, we consider sequences  $(\gamma_n)_{n \in \N^*} \in \cB$, and we will write $R_n := R_{\gamma_n}$ and $\rho_n := \rho_{\gamma_n}$.

\begin{lemma}  \label{lemma3}
	Let $(\gamma_n)_{n \in \N^*}$ be a bounded sequence of $\cB$. Then, there exists $\gamma_0 \in \cB$, such that,  up to a subsequence, $\gamma_n$  converges to $\gamma_0$ for the weak-$*$ topology of $\cB$, all components of $R_n$ converge to their respective components in $R_0$ strongly in $L^p_{\loc}(\R^3)$  for $1 \le p < 3$, weakly in $L^p(\R^3)$ for $1 \le p \le 3$, and almost everywhere. The eigenvalues of $R_n$ converge to the eigenvalues of $R_0$ strongly in $L^p_{\loc}(\R^3)$ for $1 \le p < 3$, weakly in $L^p(\R^3)$ for $1 \le p \le 3$ and almost everywhere.  \\
	\indent Moreover, if $\gamma_n \in \cP_\lambda$ for all $n$, and $\gamma_0 \in \cP_\lambda$, the convergences hold strongly in $L^p(\R^3)$ for $1 \le p < 3$, and $\cE(\gamma_0) \le \liminf \cE( \gamma_n)$.
\end{lemma}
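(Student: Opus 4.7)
I would split the statement into a compactness/convergence part and a lower-semicontinuity part. Since $\cB$ is isometric to the dual of a separable Banach space, Banach-Alaoglu provides, along a subsequence, a weak-$*$ limit $\gamma_0 \in \cB$. Lemma \ref{lemma1} gives uniform control of each entry of $R_n$ in $L^p(\R^3)$ for $1 \le p \le 3$ and of $\nabla R_n$ in $L^{3/2}(\R^3)$; Rellich-Kondrachov combined with a diagonal extraction then yields strong $L^p_{\mathrm{loc}}$ convergence for $1 \le p < 3$ and almost-everywhere convergence of each entry, while uniform $L^p$ bounds give weak $L^p(\R^3)$ convergence for $1 \le p \le 3$. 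The limit is identified with $R_{\gamma_0}$ by testing $\gamma_n$ against multiplication operators $M_f$ for $f \in \cC^\infty_c(\R^3)$, a standard identification argument in this setting.

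For the eigenvalues $\rho_n^{\pm}$, the explicit formula
\[
\rho^{\pm} = \tfrac{1}{2} (R^{\spinup\spinup} + R^{\spindown\spindown}) \pm \tfrac{1}{2} \sqrt{(R^{\spinup\spinup} - R^{\spindown\spindown})^2 + 4 |R^{\spinup\spindown}|^2}
\]
is Lipschitz in the four entries of $R$, so the almost-everywhere and $L^p_{\mathrm{loc}}$ convergences transfer directly. The estimate $0 \le \rho_n^{\pm} \le \rho_n$ preserves the uniform $L^p$ bounds and therefore yields weak $L^p(\R^3)$ convergence.

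For the ``moreover'' statement, mass conservation $\int \rho_n = \lambda = \int \rho_0$ together with $\rho_n \ge 0$ and pointwise convergence gives, via Scheff\'e's lemma, strong $L^1(\R^3)$ convergence of $\rho_n$. The nonnegative diagonal entries $R_n^{\spinup\spinup}, R_n^{\spindown\spindown}$ are dominated by $\rho_n$ and therefore also converge strongly in $L^1$; positivity of the $2\times 2$ matrix $R_n$ yields $|R_n^{\spinup\spindown}|^2 \le R_n^{\spinup\spinup} R_n^{\spindown\spindown} \le \rho_n^2/4$, and a generalized dominated convergence argument, using the equi-integrability of $\rho_n$ inherited from its $L^1$ convergence, extends strong $L^1$ convergence to the off-diagonal entry. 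Interpolating with the $L^3$ bound of Lemma \ref{lemma1} through $\|f_n - f\|_{L^p} \le \|f_n - f\|_{L^1}^\theta \|f_n - f\|_{L^3}^{1-\theta}$ upgrades the convergence to $L^p(\R^3)$ for all $1 \le p < 3$; the same conclusion holds for $\rho_n^{\pm}$ since $\rho_n^{\pm} \le \rho_n$.

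Finally, the lower semicontinuity of $\cE$ is handled term-by-term. The kinetic energy is weak-$*$ lsc in $\cB$ by Fatou applied to the Coleman decomposition (\ref{Coleman}). The Hartree term is continuous on $L^{6/5}$ via Hardy-Littlewood-Sobolev, and $6/5 \in [1,3)$. The potential term $\int \tr_{\C^2}[U R_n]$ is continuous by H\"older, splitting $U$ into its $L^{3/2+\epsilon}$ part, paired with strong convergence in an $L^{q}$ space with $q$ slightly less than $3$, and its vanishing $L^\infty$ part, handled via strong $L^1$ convergence after a truncation at infinity. The exchange-correlation term is continuous by dominated convergence, since the growth hypothesis (\ref{cond_g}) implies $|g(\rho)| \le C(\rho^{1+\beta^-} + \rho^{1+\beta^+})$ with both exponents in $(1, 5/3) \subset [1,3)$. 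The main technical difficulty, in my view, is the third step: extracting strong $L^1$ convergence of the off-diagonal entry of $R_n$ from conservation of the \emph{trace} $\int \rho_n$, which requires genuinely exploiting the positive-semidefinite character of $R_n$ as a matrix, not merely the nonnegativity of $\rho_n$.
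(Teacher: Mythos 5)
Your plan is correct and follows essentially the same route as the paper: Banach--Alaoglu for the weak-$*$ limit, Lemma~\ref{lemma1} plus Rellich--Kondrachov for the local strong convergence of $R_n$ and its eigenvalues, formula~(\ref{rhopm}) to pass the convergences to $\rho_n^{\pm}$, mass conservation to upgrade to $L^1$ convergence of $\rho_n$, the pointwise domination $|\rho_n^{\alpha\beta}|\le\rho_n$ (or, in your variant, the determinant bound $|R_n^{\spinup\spindown}|^2\le R_n^{\spinup\spinup}R_n^{\spindown\spindown}$) with generalized dominated convergence to get the off-diagonal entries, interpolation to reach $L^p$ for $p<3$, and term-by-term lower semicontinuity of $\cE$.

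One step is glossed over. You propose to identify the weak-$*$ limit of $R_n$ ``by testing $\gamma_n$ against multiplication operators $M_f$,'' but $M_f$ for $f\in\cC_c^\infty$ is \emph{not} a compact operator on $L^2(\R^3,\C^2)$, so the weak-$*$ duality in $\cB$ does not directly apply to it. The paper's fix is to write $\Tr(\gamma_n W)=\Tr\bigl((1+|\nabla|)\gamma_n(1+|\nabla|)\cdot(1+|\nabla|)^{-1}W(1+|\nabla|)^{-1}\bigr)$ and invoke the Kato--Simon--Seiler inequality to see that $(1+|\nabla|)^{-1}W(1+|\nabla|)^{-1}$ is compact (indeed in a Schatten class), while $(1+|\nabla|)\gamma_n(1+|\nabla|)$ is trace-class uniformly in $n$ because $(\gamma_n)$ is bounded in $\cB$. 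Testing with the four $2\times 2$ matrices of the form $W\,E$ built from $W\in\cC_c^\infty$ and elementary Hermitian matrices $E$ then recovers each component $\rho_0^{\alpha\beta}$. You should make this dressing-up explicit; otherwise your ``standard identification argument'' does not actually have a weak-$*$ pairing to stand on. The remainder of your sketch, including the Scheff\'e/dominated-convergence passage and the term-by-term treatment of $\cE$, matches the paper and is sound.
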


It follows from Lemma \ref{lemma2} and Lemma \ref{lemma3} that one can extract from any minimizing sequence $(\gamma_n)_{n \in \N^*}$ of (\ref{I_lambda}) a minimizing sequence, still denoted by $(\gamma_n)$, converging to some $\gamma_0$ for the weak-$*$ topology of $\cB$.
In particular, $0 \le \gamma_0 \le 1$ and $\Tr(-\Delta \gamma_0) < \infty$. 
To prove that $\gamma_0$ is indeed a minimizer of (\ref{I_lambda}), it remains to prove that $\Tr(\gamma_0) = \lambda$. Let $\alpha = \Tr(\gamma_0)$.
It is easy to get $\alpha \le \lambda$. If $\alpha < \lambda$, then we have loss of compactness (some electrons leak away). Therefore, to prove that $\alpha = \lambda$ (at least when $\lambda \le Z$), 
we need to control the behavior at infinity of the minimizers, which is not as simple as in \cite{Anantharaman2009} because of the Zeeman term $- \mu \int \bB \cdot \bm$. 
In order to control this term, we introduce the following "flip" transformation:
\begin{align} \label{transformation}
	\text{for } &  \Phi = \begin{pmatrix} \phi^\spinup \\ \phi^\spindown \end{pmatrix}, \  \text{we define }  \widetilde \Phi :=  \begin{pmatrix} \overline{\phi^\spindown} \\ - \overline{\phi^\spinup} \end{pmatrix}, \nonumber \\
	\text{for } & 
	\gamma = \sum n_k | \Phi_k \ket \bra \Phi_k | , \ \text{we define }  \widetilde \gamma := \sum n_k | \widetilde \Phi_k \ket \ \bra \widetilde \Phi_k |.
\end{align}
Note that if
\[
	 \gamma = \begin{pmatrix} 
		\gamma^{\spinup \spinup} & \gamma^{\spinup \spindown} \\
		\gamma^{\spindown \spinup} & \gamma^{\spindown \spindown} 
		\end{pmatrix} 
	\quad \text{and} \quad
		R_\gamma = \begin{pmatrix} 
		R^{\spinup \spinup} & R^{\spinup \spindown} \\
		R^{\spindown \spinup} & R^{\spindown \spindown} 
		\end{pmatrix}, 
\]
then
\[
		\widetilde \gamma (\bx, \by)= \begin{pmatrix} 
			{\gamma^{\spindown \spindown}} & - \gamma^{\spinup \spindown} \\
				- \gamma^{\spindown \spinup} & {\gamma^{\spinup \spinup}}
				\end{pmatrix}(\by, \bx)
			\quad \text{and} \quad
		R_{\widetilde \gamma} = \begin{pmatrix} 
		R^{\spindown \spindown} & - R^{\spinup \spindown} \\
		- R^{\spindown \spinup} & R^{\spinup \spinup} 
		\end{pmatrix}, 
\]
from which we deduce the following lemma, whose proof is straightforward.
\begin{lemma}  \label{lemma4}
	If $\gamma \in \cP_\lambda$, then $\widetilde \gamma \in \cP_\lambda$. Moreover, it holds that
		$ \Tr( - \Delta \widetilde \gamma_n) = \Tr( - \Delta \gamma_n) $, 
		 $ \widetilde \rho = \rho$, and 
		 $\widetilde \bm = - \bm$, where $\rho$ and $\bm$ have been defined in (\ref{bm}).
	In particular, it holds that
\begin{equation} \label{ControlZeeman}
	\tr_{\C^2} \left[ U R  \right] + \tr_{\C^2} \left[ U \widetilde R \right] = 2 \int_{\R^3} V \rho .
\end{equation}
\end{lemma}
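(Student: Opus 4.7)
The proof is essentially a direct verification, as the author indicates; my plan is simply to walk through each assertion by checking what the flip transformation does at the level of the Coleman decomposition and of the $2\times 2$ density matrix.

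First I would establish that the map $\Phi\mapsto\widetilde\Phi$ is an antiunitary involution on $L^2(\R^3,\C^2)$. The key point is the identity
\[
\langle \widetilde\Phi_k | \widetilde\Phi_l \rangle
= \int_{\R^3}\!\bigl(\phi_k^{\spindown}\overline{\phi_l^{\spindown}} + \phi_k^{\spinup}\overline{\phi_l^{\spinup}}\bigr)
= \overline{\langle \Phi_k|\Phi_l\rangle} = \delta_{kl},
\]
so that if $\{\Phi_k\}$ is orthonormal then so is $\{\widetilde\Phi_k\}$. Combined with $\|\nabla\widetilde\Phi_k\|_{L^2}^2 = \|\nabla\phi_k^{\spindown}\|_{L^2}^2+\|\nabla\phi_k^{\spinup}\|_{L^2}^2 = \|\nabla\Phi_k\|_{L^2}^2$ and the fact that the occupation numbers $n_k$ are preserved, this gives both $\widetilde\gamma\in\cP_\lambda$ via the Coleman characterization (\ref{Coleman}) and the equality $\Tr(-\Delta\widetilde\gamma) = \Tr(-\Delta\gamma)$.

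Next I would use the explicit form of $R_{\widetilde\gamma}$ displayed just above the lemma. Taking the trace, one immediately reads off
\[
\widetilde\rho = R^{\spindown\spindown} + R^{\spinup\spinup} = \rho.
\]
For the spin angular momentum density, using $\bm_\gamma = \tr_{\C^2}[\bsigma\cdot R_\gamma]$ with the Pauli matrices as written in the paper, the three components are
\[
m_x = R^{\spinup\spindown}+R^{\spindown\spinup},\quad
m_y = \ri\bigl(R^{\spindown\spinup}-R^{\spinup\spindown}\bigr),\quad
m_z = R^{\spinup\spinup}-R^{\spindown\spindown},
\]
and substituting the entries of $R_{\widetilde\gamma}$ flips the sign of each component, yielding $\widetilde\bm = -\bm$.

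Finally, the identity (\ref{ControlZeeman}) follows by adding the two copies of (\ref{bm}):
\[
\tr_{\C^2}[UR] + \tr_{\C^2}[U\widetilde R]
= \bigl(V\rho - \mu\bB\cdot\bm\bigr) + \bigl(V\widetilde\rho - \mu\bB\cdot\widetilde\bm\bigr)
= 2\int_{\R^3}V\rho,
\]
so the Zeeman contribution cancels exactly, which is precisely the symmetrization property needed later to control the behavior at infinity. There is no real obstacle here; the only subtlety worth double-checking is the antilinearity of the flip, to make sure orthonormality is preserved (which it is, via complex conjugation of the inner product).
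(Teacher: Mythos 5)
Your proof is correct and follows exactly the direct verification the paper declares "straightforward": check that the flip is an antiunitary involution preserving the Coleman data (occupation numbers, orthonormality via complex conjugation of the inner product, kinetic energy), read off $R_{\widetilde\gamma}$, and then chase the sign changes through the components of $\bm$ before invoking (\ref{bm}). One tiny slip worth flagging: with the Pauli $\sigma_y$ as written in the paper, $m_y = \tr(\sigma_y R) = \ri\bigl(R^{\spinup\spindown}-R^{\spindown\spinup}\bigr)$, the negative of what you wrote; since both conventions flip sign under $R\mapsto\widetilde R$, this does not affect the conclusion $\widetilde\bm=-\bm$ or the cancellation of the Zeeman term, but the sign should be fixed.
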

In other words, this transformation flips the spin-up and spin-down channels. This lemma allows to cancel the Zeeman term, and is an essential tool throughout the proof. We can first prove


\begin{lemma}  \label{lemma5}
	~\begin{enumerate}[label=(\roman*)]
		\item For all  $\lambda > 0$, $I_\lambda^\infty < 0$.
		\item For all $\lambda > 0$, $-\infty < I_\lambda < I_\lambda^\infty < 0$.
		\item For all $0 < \mu < \lambda$, $I_\lambda \le I_\mu + I^\infty_{\lambda - \mu}$.
		\item The functions $\lambda \mapsto I_\lambda$ and $\lambda \mapsto I^\infty_\lambda$ are non increasing.
	\end{enumerate}
\end{lemma}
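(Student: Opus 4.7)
My plan is to establish the four items in the order (i), (iii), (iv), (ii). \emph{For (i)}, I build a trial state from $N$ fully polarized identical orthonormal orbitals $\Phi_i(x) = (R^{-3/2}\chi((x-y_i)/R), 0)^T$, where $\chi$ is smooth, compactly supported, of unit $L^2$-norm, and $|y_i - y_j| \ge D$; each orbital is occupied with weight $\mu := \lambda/N \le 1$ (take $N \ge \lambda$). Sending $D \to \infty$ kills the cross-Hartree contribution, and the remaining energy reads
\[
\cE^\infty(\gamma) \,=\, A\lambda R^{-2} \,+\, B\lambda^2 N^{-1} R^{-1} \,-\, C\lambda^\alpha N^{1-\alpha} R^{3(1-\alpha)} \,+\, o_{D\to\infty}(1),
\]
where the negative term comes from the asymptotic $g(\rho)\le -c\rho^\alpha$ applied to the (small) density $2\mu R^{-3}|\chi|^2$. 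Setting $R = N^s$ and comparing exponents, the negative term dominates both positive ones as $N \to \infty$ for a suitable $s > 0$ \emph{if and only if} $\alpha < 3/2$; this precise matching between the scaling of the trial state and the hypothesis on $\alpha$ is the main technical obstacle of the proof.

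\emph{For (iii)}, given $\epsilon > 0$ I pick $\gamma_1 \in \cP_\mu$ with $\cE(\gamma_1) \le I_\mu + \epsilon$ and $\gamma_2 \in \cP_{\lambda-\mu}$ with $\cE^\infty(\gamma_2) \le I^\infty_{\lambda-\mu} + \epsilon$ (truncated to compact support, which is harmless by continuity of the energy terms via Lemmas~\ref{lemma1} and~\ref{lemma3}), and form the trial state $\gamma_1 + \tau_y\gamma_2$ (with $\tau_y$ the spatial translation by $y$). For $|y|$ large enough the supports are disjoint and $\gamma_1 + \tau_y\gamma_2 \in \cP_\lambda$; letting $|y| \to \infty$, the cross-Hartree term is $O(1/|y|)$, the $U$-contribution of the far piece vanishes since $U$ does at infinity, and $E_\xc^{\LSDA}$ is additive on disjoint supports, giving $\cE(\gamma_1+\tau_y\gamma_2)\to \cE(\gamma_1) + \cE^\infty(\gamma_2)$ and hence (iii). \emph{For (iv)}, (iii) combined with (i) gives $I_\lambda \le I_\mu + I^\infty_{\lambda-\mu} < I_\mu$ for $\mu < \lambda$; the analogous binding inequality for $I^\infty$ (same construction with $U \equiv 0$, translating both pieces to infinity) yields $I^\infty_\lambda < I^\infty_\mu$.

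\emph{For (ii)}, the second inequality is exactly (i). For the first, observe that $\cE^\infty$ is invariant under the flip $\gamma \mapsto \widetilde\gamma$ of Lemma~\ref{lemma4}, since the kinetic energy, the density $\rho$, and the eigenvalues $\rho^\pm = (\rho \pm |\bm|)/2$ of $R_\gamma$ are all preserved (using $|\widetilde\bm| = |\bm|$). By (\ref{ControlZeeman}) this gives $\cE(\gamma) + \cE(\widetilde\gamma) = 2\cE^\infty(\gamma) + 2\int V\rho_\gamma$, and so $I_\lambda \le \cE^\infty(\gamma) + \int V\rho_\gamma$ for every $\gamma \in \cP_\lambda$. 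I apply this to a minimizing sequence $(\gamma_n)$ of $I^\infty_\lambda$: by Lemmas~\ref{lemma1} and~\ref{lemma2} the densities $\rho_n$ are bounded in $L^1 \cap L^{5/3}$. If $(\rho_n)$ were vanishing in Lions' sense ($\sup_y \int_{B(y,1)}\rho_n \to 0$), interpolation would force $\rho_n \to 0$ in $L^{1+\beta^\pm}$, hence $\int g(\rho_n) \to 0$ by the growth bound in (\ref{cond_g}), which together with the non-negativity of the kinetic and Hartree terms would yield $\liminf \cE^\infty(\gamma_n) \ge 0$, contradicting (i). Hence, up to translation, a fixed positive mass $\eta$ of $\rho_n$ lies in some fixed ball $B(0, R_0)$ for all $n$, giving $\int V\rho_n \le -z_1\eta/R_0$ uniformly, and finally $I_\lambda \le I^\infty_\lambda - z_1\eta/R_0 < I^\infty_\lambda$.
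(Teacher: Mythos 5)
Your proof is correct. The main deviation from the paper is in part (i): you construct a trial state out of $N$ well-separated polarized bubbles, each occupied with fractional weight $\lambda/N$, so that the negativity of $I_\lambda^\infty$ is obtained \emph{directly} for every $\lambda>0$ by optimizing over $N$, the scale $R=N^s$, and the separation $D$. The paper instead uses a single-orbital scaling $\gamma_{\lambda\sigma}$ that only establishes (i) for $\lambda\le\lambda_0$ small (so that the occupation number $\le 1$ and the density is small enough to invoke the bound $g(\rho)\le -c\rho^\alpha$), and then recovers the general statement by bootstrapping through the subadditivity (iii) and monotonicity (iv). Your ordering (i), (iii), (iv), (ii) is linear, whereas the paper's presentation (``(i) for small $\lambda$'', (ii), (iii), ``(iv) and (i)'') hides the induction and reads slightly circularly; your version makes the dependence on $\alpha<3/2$ fully explicit at the level of the trial-state scaling. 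The exponent analysis you sketch does close: with $R=N^s$ the three exponents are $-2s$, $-1-s$ and $(1-\alpha)(1+3s)$, and a valid $s>0$ exists precisely when $\alpha<3/2$ (e.g.\ for $4/3<\alpha<3/2$ one needs $(\alpha-1)/(5-3\alpha)<s<(2-\alpha)/(3\alpha-4)$, an interval nonempty exactly when $\alpha<3/2$). A second, smaller difference is in (iii): you control the potential-energy contribution of the far piece by using that $U$ vanishes at infinity (so $\int\mathrm{tr}_{\C^2}[U\,R_{\tau_y\gamma_2}]\to 0$), whereas the paper applies the flip $\widetilde{\phantom{\gamma}}$ to the far piece and uses (\ref{ControlZeeman}) together with $V\le 0$; both are valid under assumption 2 of Theorem 1. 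Your treatment of (ii) — recasting the flip identity as $I_\lambda\le\cE^\infty(\gamma)+\int V\rho_\gamma$ for every $\gamma\in\cP_\lambda$ before applying it to a non-vanishing minimizing sequence of $I_\lambda^\infty$ — is the same argument as the paper's, just organized a little more cleanly, and (iv) is, as you note, an immediate consequence of (i) and (iii) (you actually get strict decrease, which is more than the statement requires).
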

We then have the important result

\begin{lemma}  \label{lemma6}
	Let $\lambda > 0$ and $(\gamma_n)_{n \in \N^*} \in \cP_\lambda$ be any minimizing sequence of $I_\lambda$ that converges to some $\gamma_0$ for the weak-$*$ topology of $\cB$. Let $\alpha := \Tr( \gamma_0)$. Then
	\begin{enumerate}[label=(\roman*)]
		\item $\alpha \le \lambda$.
		\item $\alpha \neq 0$.
		\item If $0 < \alpha < \lambda$, then $\gamma_0$ is a minimizer for the problem $I_\alpha$, there exists $\beta > 0$ with $\alpha + \beta \le \lambda$ such that $I_\beta^\infty$ has also a minimizer, and $I_{\lambda} = I_\alpha + I_\beta^\infty + I_{\lambda - \alpha - \beta}^\infty$.
	\end{enumerate}
\end{lemma}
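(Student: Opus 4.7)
The plan is to apply Lions' concentration-compactness method, following \cite{Anantharaman2009}, with the flip transformation of Lemma \ref{lemma4} as the essential new ingredient for taming the Zeeman term. Part (i) is immediate from weak-$*$ lower semi-continuity of the trace: $\alpha = \Tr \gamma_0 \le \liminf \Tr \gamma_n = \lambda$.

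For Part (ii), I would argue by contradiction and assume $\alpha = 0$, so that $\gamma_n \rightharpoonup 0$. Applying Lions' lemma to the sequence $(\rho_n)$, which is bounded in $L^1 \cap L^{5/3}(\R^3)$ by Lemma \ref{lemma1}, rules out the compactness-without-translation alternative (which would force $\gamma_0 \neq 0$), leaving vanishing or mass escape to infinity. In the vanishing case $\rho_n \to 0$ strongly in $L^p(\R^3)$ for $1 < p < 3$, hence $\int V \rho_n \to 0$ (since $V \in L^{3/2} + L^\infty_\epsilon$ vanishes at infinity), $J(\rho_n) \to 0$ (Hardy-Littlewood-Sobolev), and $E_{\xc}^{\LSDA}(\rho_n^+, \rho_n^-) \to 0$ by (\ref{cond_g}). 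Here the flip trick enters: since $R$ and $\widetilde R$ have identical eigenvalues, identity (\ref{ControlZeeman}) yields
\[
  \cE(\gamma_n) + \cE(\widetilde \gamma_n) = \Tr(-\Delta \gamma_n) + 2 J(\rho_n) + 2 \int_{\R^3} V \rho_n + 2 E_{\xc}^{\LSDA}(\rho_n^+, \rho_n^-),
\]
with the Zeeman contributions cancelling exactly. The right-hand side is then $\ge o(1)$, whereas the left-hand side tends to $2 I_\lambda < 0$ by Lemma \ref{lemma5}(ii), a contradiction. If instead all mass escapes (in one or several chunks), translate each chunk by the corresponding $-y_n^{(k)}$, extract weak-$*$ limits $\gamma_\star^{(k)}$ with masses $\mu_k$ summing to $\lambda$, and use that $U$ vanishes at infinity together with asymptotic separation of supports to obtain $I_\lambda \ge \sum_k \cE^\infty(\gamma_\star^{(k)}) \ge \sum_k I^\infty_{\mu_k} \ge I^\infty_\lambda$ (the last inequality by sub-additivity of $I^\infty$), once again contradicting Lemma \ref{lemma5}(ii).

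For Part (iii), fix $\alpha \in (0,\lambda)$ and use Lions' method to split $\gamma_n = \gamma_n^{(1)} + \gamma_n^{(2)}$, with $R_{\gamma_n^{(1)}} \to R_{\gamma_0}$ in $L^p_{\loc}$ and $\gamma_n^{(2)}$ escaping to infinity. Exploiting the decay at infinity of $V$ and $\bB$, the locality of $E_{\xc}^{\LSDA}$, and the asymptotic separation of supports, one obtains
\[
  \cE(\gamma_n) = \cE(\gamma_n^{(1)}) + \cE^\infty(\gamma_n^{(2)}) + o(1).
\]
Passing to the $\liminf$ and invoking Lemma \ref{lemma3} for the first term gives $I_\lambda \ge \cE(\gamma_0) + I^\infty_{\lambda - \alpha} \ge I_\alpha + I^\infty_{\lambda - \alpha}$; combined with the reverse inequality from Lemma \ref{lemma5}(iii), all are equalities and $\gamma_0$ minimizes $I_\alpha$. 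To extract $\beta$, apply the same analysis to $\gamma_n^{(2)}$, which is a minimizing sequence for $I^\infty_{\lambda - \alpha}$. Since $I^\infty_{\lambda - \alpha} < 0$ by Lemma \ref{lemma5}(i), vanishing is excluded directly without the flip trick (no Zeeman term appears in $\cE^\infty$, so vanishing would force $\cE^\infty(\gamma_n^{(2)}) \ge o(1)$, contradicting the negative limit). Compactness up to translation then provides a weak-$*$ limit $\gamma_\star$ with $\beta := \Tr \gamma_\star > 0$, and a second application of the splitting delivers $I^\infty_{\lambda - \alpha} = I^\infty_\beta + I^\infty_{\lambda - \alpha - \beta}$ with $\gamma_\star$ a minimizer of $I^\infty_\beta$.

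The main obstacle throughout is the Zeeman term $-\mu \int \bB \cdot \bm$, which has indefinite sign and couples to the off-diagonal entries of $R$, so that the standard comparison arguments for sign-definite attractive potentials cannot be transplanted directly. The flip transformation of Lemma \ref{lemma4} is the decisive device: pairing $\gamma$ with $\widetilde \gamma$ cancels the Zeeman contribution while only doubling the remaining four terms of $\cE$, thereby reducing the strict negativity $I_\lambda < 0$ used in the vanishing step of Part (ii) to an essentially spin-unpolarized estimate.
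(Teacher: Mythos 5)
Your overall strategy matches the paper's: concentration--compactness via a Lions-style splitting for part~(iii), with the flip transformation of Lemma~\ref{lemma4} available to tame the Zeeman term. Parts~(i) and~(iii) are correct in spirit, and your treatment of~(iii) (IMS-type splitting $\gamma_n = \gamma_n^{(1)} + \gamma_n^{(2)}$, separating the energy up to $o(1)$, invoking Lemma~\ref{lemma3} and Lemma~\ref{lemma5}(iii), then repeating the analysis on $\gamma_n^{(2)}$ to produce $\gamma_\star$ and $\beta$) is exactly the paper's route, though the paper realizes the splitting concretely via cutoffs $\chi_{A_n}, \xi_{A_n}$ with $\chi^2+\xi^2=1$ and the kinetic-energy estimate~(\ref{KE}); you should make this explicit since the bound $\Tr(-\Delta\gamma_{1,n}) + \Tr(-\Delta\gamma_{2,n}) \le \Tr(-\Delta\gamma_n) + O(A_n^{-2})$ is what makes the energy splitting an inequality in the right direction.

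There is, however, a genuine logical gap in your vanishing sub-case of part~(ii). After invoking the flip identity you assert that the left-hand side $\cE(\gamma_n) + \cE(\widetilde\gamma_n)$ ``tends to $2I_\lambda$.'' This is not justified: you know $\cE(\gamma_n) \to I_\lambda$, but for $\cE(\widetilde\gamma_n)$ you only have the one-sided bound $\cE(\widetilde\gamma_n) \ge I_\lambda$. In fact, plugging the vanishing estimates into the flip identity gives $\cE(\widetilde\gamma_n) = \Tr(-\Delta\gamma_n) - \cE(\gamma_n) + o(1) \ge -I_\lambda + o(1) > 0$, so $\cE(\gamma_n)+\cE(\widetilde\gamma_n) \ge o(1)$ is perfectly consistent with the right-hand side being $\ge o(1)$; no contradiction follows from the pairing. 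The flip is in fact unnecessary here: in the vanishing regime $\rho_n \to 0$ in $L^p$ for $1<p<3$, and since $|\bm_n| \le 2\rho_n$ pointwise, the full potential term $\int_{\R^3} \tr_{\C^2}[U R_n] = \int V\rho_n - \mu\int \bB\cdot\bm_n$ tends to $0$ directly (split $U$ into $L^{3/2+\epsilon}$ and small-at-infinity $L^\infty$ parts, exactly as you do for $V$). Together with $J(\rho_n)\to 0$ and $E_{\xc}^{\LSDA}\to 0$ this gives $\cE(\gamma_n) \ge \tfrac12\Tr(-\Delta\gamma_n) + o(1) \ge o(1)$, hence $I_\lambda \ge 0$, contradicting Lemma~\ref{lemma5}(ii). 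The flip trick is genuinely needed in Lemmas~\ref{lemma5} and~\ref{lemma8}, where one must turn a sign-indefinite Zeeman contribution into a sign-definite Coulomb one; here the term simply vanishes and no pairing is required.
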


According to this lemma, if $\alpha < \lambda$, $\gamma_0$ is a minimizer for $I_\alpha$. In this case, it satisfies the Euler-Lagrange equation
\[
	\gamma_0 = \b1 (H_{\gamma_0} < \epsilon_F) + \delta \quad \text{with} \quad \delta \subset \textrm{Ker}( H_{\gamma_0} - \epsilon_F)
\]
with $H_{\gamma_0}$ as defined in (\ref{Hgamma}). We then use the very general
\begin{lemma} \label{lemma7}
	It holds $\sigma_{\textrm{ess}} (H_{\gamma_0}) = [0, +\infty[$. Moreover, if $0 < \lambda < Z$, then $H_{\gamma_0}$ has infinitely many negative eigenvalues, and every eigenvector corresponding to such an eigenvalue is exponentially decreasing.
\end{lemma}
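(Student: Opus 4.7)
Write $H_{\gamma_0}=-\tfrac12\Delta\bI_2+W$, where the $2\times 2$ matrix-valued effective potential is $W:=U+(\rho_{\gamma_0}*|\cdot|^{-1})\bI_2+V_{\xc}^{\LSDA}(\gamma_0)$. The proof splits into three stages: compute the essential spectrum, produce infinitely many eigenvalues below it, and establish exponential decay of the corresponding eigenvectors.

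\emph{Essential spectrum and exponential decay.} Each entry of $W$ belongs to $L^{p}(\R^3)+L^{\infty}(\R^3)$ for some $p>3/2$, with the $L^\infty$ part vanishing at infinity: the entries of $U$ by hypothesis; the Hartree potential $\rho_{\gamma_0}*|\cdot|^{-1}\in C_0(\R^3)$ by classical Newton-potential estimates using $\rho_{\gamma_0}\in L^1\cap L^3$ (Lemma~\ref{lemma1}); and the operator norm of $V_{\xc}(\gamma_0)$ is bounded pointwise by $C(\rho^{\beta^-}+\rho^{\beta^+})$, which, on the partition $\{\rho<1\}\cup\{\rho\ge 1\}$, splits into an $L^{1/\beta^-}$ contribution and an $L^{3/\beta^+}$ contribution, both exponents being $>3/2$ under $\beta^\pm<2/3$. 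Hence $W$ is relatively $(-\Delta)$-compact, and Weyl's theorem yields $\sigma_{\textrm{ess}}(H_{\gamma_0})=[0,+\infty)$. Once this is known, the exponential decay of any eigenvector $\Psi$ attached to an eigenvalue $E<0$ follows from a standard Agmon or Combes--Thomas weight argument: since $W$ vanishes at infinity, the conjugated operator $e^{\mu|\br|}(H_{\gamma_0}-E)e^{-\mu|\br|}$ is coercive outside a sufficiently large ball for every $0<\mu<\sqrt{-2E}$, forcing $e^{\mu|\br|}\Psi\in L^2(\R^3,\C^2)$.

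\emph{Infinitely many negative eigenvalues.} Set $\alpha:=\Tr\gamma_0\le\lambda<Z$. Multipole expansion of $V$ gives $V(\br)=-Z/|\br|+O(|\br|^{-2})$ as $|\br|\to\infty$, and a Newton-potential argument based on $\rho_{\gamma_0}\in L^1$ gives $(\rho_{\gamma_0}*|\cdot|^{-1})(\br)=\alpha/|\br|+o(|\br|^{-1})$. Fix a profile $\chi\in C^\infty_c(\{1<|\br|<2\})$ with $\|\chi\|_{L^2}=1$, set $\chi_R(\br):=R^{-3/2}\chi(\br/R)$, and take trial spinor fields $\Phi_R:=\chi_R\,\mathbf{e}_R$, where the unit spinor $\mathbf{e}_R\in\C^2$ is chosen to make the Zeeman contribution $-\mu\int\bB\cdot\bm_{\Phi_R}$ nonpositive; Lemma~\ref{lemma4} ensures such a choice is possible, since replacing $\mathbf{e}_R$ by its spin flip $\widetilde{\mathbf{e}}_R$ reverses $\bm_{\Phi_R}$, so at least one of the two orientations has nonpositive Zeeman energy. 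Rescaling $\br=R\mathbf{y}$ yields
\[
\langle\Phi_R,H_{\gamma_0}\Phi_R\rangle \;\le\; \frac{K_\chi}{R^2}\;-\;\frac{(Z-\alpha)C_\chi}{R}\;+\;o(1/R), \qquad R\to\infty,
\]
with $K_\chi, C_\chi>0$; the $o(1/R)$ absorbs the xc correction, which decays strictly faster than $1/R$ thanks to $\beta^\pm<2/3$ together with the vanishing $L^1$ mass of $\rho$ on $[R,2R]$. Selecting scales $R_1<R_2<\cdots\to\infty$ with pairwise disjoint annular supports $[R_n,2R_n]$ each large enough that the upper bound is strictly negative, the family $(\Phi_{R_n})_n$ is orthogonal, and the Courant--Fischer min--max principle then delivers infinitely many eigenvalues of $H_{\gamma_0}$ strictly below $\inf\sigma_{\textrm{ess}}(H_{\gamma_0})=0$.

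\textbf{Main obstacle.} The delicate step is the variational estimate. The Zeeman contribution is the key new difficulty compared with the spin-unpolarized setting: the hypothesis gives only that $\bB$ vanishes at infinity, with no quantitative rate, so it cannot be controlled termwise. The spin-flip identity of Lemma~\ref{lemma4} bypasses this by allowing, scale by scale, a choice of trial spinor that makes the Zeeman contribution nonpositive regardless of the local orientation of $\bB$. Fitting the xc correction inside the $o(1/R)$ combines the growth exponent bound $\beta^+<2/3$ of (\ref{cond_g}) with the vanishing $L^1$ mass of $\rho$ on large annuli.
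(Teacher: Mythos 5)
Your argument is correct in its overall structure and correctly identifies the spin-flip trick (Lemma~\ref{lemma4}) as the key new ingredient for handling the Zeeman term, but it takes a genuinely different route from the paper on the two nontrivial points. For the infinitude of negative eigenvalues, the paper does not build trial functions from scratch; it observes that for $\Psi=(\psi,\psi)^T$ the quantity $\langle\Psi,H_{\gamma_0}\Psi\rangle+\langle\widetilde\Psi,H_{\gamma_0}\widetilde\Psi\rangle$ has no Zeeman contribution and, after dropping the (nonpositive) xc matrix term, is bounded by $4\langle\psi,H_1\psi\rangle$ with $H_1=-\tfrac12\Delta+\rho_0\star|\cdot|^{-1}+V$ a \emph{scalar} Schr\"odinger operator; then it cites Lions' result that $H_1$ has infinitely many negative eigenvalues when $\Tr\gamma_0<Z$. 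Your approach re-derives that Lions lemma directly via a multipole expansion and trial spinors on dilated annuli, which is more self-contained (and in fact avoids a subtle min--max step in the reduction-to-$H_1$ argument, since your trial functions have disjoint supports, so cross terms vanish and the quadratic form is manifestly negative-definite on their span). For the exponential decay, the paper derives a differential inequality $-\tfrac12\Delta\rho_0-\epsilon_F\rho_0\le 0$ outside a large ball from the Euler--Lagrange equation and concludes by comparison; you invoke an Agmon/Combes--Thomas weight argument instead. Both are standard.

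One step in your write-up does not survive scrutiny: the claim that the xc contribution to the annular Rayleigh quotient is $o(1/R)$. With only $\rho_0\in L^1\cap L^3$ and $\beta^\pm$ possibly close to $0$, H\"older on the annulus yields at best $\int\chi_R^2\rho_0^{\beta^\pm}\lesssim\bigl(\int_{R<|\br|<2R}\rho_0\bigr)^{\beta^\pm}R^{-3\beta^\pm}$, which is not $o(1/R)$ when $\beta^\pm<1/3$, and the $L^1$-tail $\epsilon_R$ has no quantitative rate. Fortunately, this estimate is unnecessary: the xc matrix appearing in $H_{\gamma_0}$ is $\tfrac{g'(\rho_0^+)}{2}(\bI_2+M)+\tfrac{g'(\rho_0^-)}{2}(\bI_2-M)$ with $M$ hermitian of eigenvalues $\pm 1$, so $\bI_2\pm M\ge 0$ and $g'\le 0$ make the xc contribution pointwise negative semi-definite; it may simply be dropped when bounding the Rayleigh quotient from above, which is exactly what the paper does. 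Replacing your $o(1/R)$ claim by this sign argument closes the gap. You also silently take $H_{\gamma_0}$ as given; the paper derives it from the Euler--Lagrange conditions for the variational problem $I_\alpha$ as part of this lemma's proof, which is needed to know the explicit form of the xc matrix used above.
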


From this lemma, we deduce the concentration-compactness result:
\begin{lemma}  \label{lemma8}
	Let $0 < \alpha, \beta$ be such that $\alpha + \beta \le Z$. Suppose that $I_\alpha$ and $I_{\beta}^\infty$ admit minimizers. Then
	\[
		I_{\alpha + \beta} < I_\alpha + I_\beta^\infty \quad (\  < I_\alpha) .
	\]
\end{lemma}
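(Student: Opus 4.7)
The plan is to follow the classical Lions binding‑energy strategy, adapted to handle the Zeeman term via the spin‑flip symmetry of Lemma~\ref{lemma4}. Let $\gamma_1$ and $\gamma_2$ minimise $I_\alpha$ and $I_\beta^\infty$ respectively. I would construct a trial state in $\cP_{\alpha'+\beta'}$ with $\alpha'+\beta' \le \alpha + \beta$ by placing a truncation of $\gamma_2$ at distance $R$ from $\gamma_1$. The energy gain relative to $I_\alpha + I_\beta^\infty$ would come from the sum of the Coulomb cross term and the nuclear attraction of the far cluster, which by a multipole expansion contributes $(\alpha - Z)\beta/R + o(1/R)$; this is strictly negative since $\alpha + \beta \le Z$ forces $\alpha \le Z - \beta < Z$. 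Combined with the monotonicity of $\lambda \mapsto I_\lambda$ (Lemma~\ref{lemma5}(iv)), strict negativity of this gain will translate into the desired inequality.

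For the construction I first truncate: by Lemma~\ref{lemma7} the occupied orbitals of $\gamma_1$ decay exponentially (the Fermi level sits strictly below zero since $\alpha < Z$ and $H_{\gamma_1}$ has infinitely many negative eigenvalues), and an identical Agmon argument applies to the minimiser $\gamma_2$ of $I_\beta^\infty$ whose effective Hamiltonian $-\tfrac12\Delta + \rho_{\gamma_2} \star |\cdot|^{-1} + g'(\rho_{\gamma_2})$ has positive asymptotic potential $\beta/|\cdot|$. With a smooth cutoff $\chi_r$ equal to $1$ on $B(0,r-1)$ and $0$ outside $B(0,r)$, set $\gamma_i^r := \chi_r \gamma_i \chi_r$; these satisfy $0 \le \gamma_i^r \le 1$, are supported in $B(0,r)$, and differ from $\gamma_i$ in trace and in $\cE$ (resp.\ $\cE^\infty$) by $O(e^{-cr})$. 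Denoting by $\tau_R$ translation by $R\be_3$, I form the paired trial states
\[
\gamma_R^+ := \gamma_1^r + \tau_R\, \gamma_2^r\, \tau_R^*, \qquad \gamma_R^- := \gamma_1^r + \tau_R\, \widetilde{\gamma_2^r}\, \tau_R^*,
\]
where $\widetilde{\gamma_2^r}$ is the spin‑flip from Lemma~\ref{lemma4}, which is itself a minimiser of the corresponding $I^\infty$ problem with the same energy. For $R > 2r$ the two summands have disjoint supports, so $0 \le \gamma_R^\pm \le 1$, $\Tr(\gamma_R^\pm) = \alpha^r + \beta^r \le \alpha + \beta$, and the kinetic, Hartree self‑interaction, and (by locality) LSDA exchange--correlation contributions all decompose additively.

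Expanding the remaining coupling terms (the Hartree cross term and $\int U R$ on the far cluster), and using $\widetilde{\bm} = -\bm$ from Lemma~\ref{lemma4}, I then average the two states to cancel the Zeeman contribution of the far cluster:
\[
\tfrac12\bigl[\cE(\gamma_R^+) + \cE(\gamma_R^-)\bigr] = I_\alpha + I_\beta^\infty + \iint \frac{\rho_1^r(x)\,\rho_2^r(x' - R\be_3)}{|x - x'|}\, dx\, dx' + \int_{\R^3} V\, \tau_R \rho_2^r + O(e^{-cr}).
\]
A monopole expansion on the disjoint supports identifies the two remaining coupling terms as $(\alpha^r - Z)\beta^r/R + O(1/R^2)$. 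Choosing $r$ large enough that $\alpha^r \le Z - \beta/2$ and $\beta^r \ge \beta/2$, then $R \gg r$ (e.g.\ $R = r^2$), makes this $\Theta(-1/R)$ contribution dominate the $O(1/R^2)$ and $O(e^{-cr})$ errors, so $\min_\pm \cE(\gamma_R^\pm) < I_\alpha + I_\beta^\infty$; the monotonicity of $I_\lambda$ then yields $I_{\alpha+\beta} \le I_{\alpha^r+\beta^r} \le \min_\pm \cE(\gamma_R^\pm) < I_\alpha + I_\beta^\infty$. The main obstacle is exactly the uncontrolled Zeeman term: since $\bB$ is only assumed to vanish at infinity with no prescribed rate, $-\mu \int \bB \cdot \tau_R \bm_2^r$ is a priori only $o(1)$, which is too weak to be absorbed into the $1/R$ Coulomb gain. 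The flip trick of Lemma~\ref{lemma4}, producing two trial states whose far‑cluster Zeeman contributions are opposite and average to zero, is precisely what allows the spinless estimate of \cite{Anantharaman2009} to go through in this spin‑polarised setting.
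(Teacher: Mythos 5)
Your proposal correctly identifies the binding-energy strategy and, crucially, the role of the spin-flip transformation of Lemma~\ref{lemma4} in cancelling the uncontrolled Zeeman contribution of the far cluster; this is exactly what the paper does when it forms the pair $\gamma_n$, $\gamma_n^\sharp$ and averages. (The paper does not truncate and instead uses a normalisation $\min\{1,\|\cdot\|^{-1}\}$ to keep $0 \le \gamma \le 1$; your disjoint-support construction via cutoffs is a legitimate variant.)

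However, there is a genuine gap. Your argument rests on exponential (or at least fast enough) decay of \emph{both} minimisers. For $\gamma_1$ this is fine: $\alpha < Z$ gives $\epsilon_F < 0$ by Lemma~\ref{lemma7}, so Agmon applies. For the minimiser $\gamma_2$ of $I_\beta^\infty$ the situation is different: the effective operator $H^\infty_{\gamma_2}$ has no attractive nuclear term, $\inf\sigma_{\mathrm{ess}}(H^\infty_{\gamma_2}) = 0$, and it is perfectly possible that the Fermi level $\epsilon_F' = 0$, with $\gamma_2$ partially occupying a threshold eigenvector $\Psi$ of $H^\infty_{\gamma_2}$. Your claim that ``an identical Agmon argument applies'' because the effective potential behaves like $+\beta/|\cdot|$ is incorrect at threshold: an eigenvector with eigenvalue exactly $0$ against an asymptotically positive Coulomb tail decays sub-exponentially at best (WKB gives something like $\re^{-c\sqrt{r}}$, and even that is not uniformly justified here because $g'\le 0$ may compete with $\beta/|\cdot|$ in sign). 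You never establish $\epsilon_F' < 0$ for $\gamma_2$, so the truncation-and-translate argument does not close in this regime.

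The paper handles precisely this case $\epsilon_F' = 0$ by an entirely separate mass-transfer argument: take $\Psi$ with $H^\infty_{\gamma_0'}\Psi = 0$ and $\gamma_0'\Psi = \mu\Psi$, $\mu>0$, and for small $\eta$ move $\eta$ units of mass from $\gamma_0'$ onto the next unoccupied, strictly negative eigenvector $\Phi_{N_2+1}$ of $H_{\gamma_0}$ (which exists because $H_{\gamma_0}$ has infinitely many negative eigenvalues when $\alpha < Z$). This gives $\cE(\gamma_\eta) = I_\alpha + 2\eta\epsilon_{N_2+1} + o(\eta)$ and $\cE^\infty(\gamma_\eta') = I_\beta^\infty + o(\eta)$, hence $I_{\alpha+\beta} \le I_{\alpha+\eta} + I^\infty_{\beta-\eta} < I_\alpha + I_\beta^\infty$ for $\eta$ small. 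This case distinction and the threshold mass-transfer trick are missing from your proof and are not a minor technicality.
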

The end of the proof goes as follows. Let us suppose that $\lambda \le Z$, and $\alpha < \lambda$. Then, according to Lemma \ref{lemma6}, $\gamma_0$ is a minimizer for $I_\alpha$, and there exists $\beta > 0$ such that $\alpha + \beta \le \lambda \le Z$ so that $I_\beta^\infty$ has also a minimizer, and it holds $I_\lambda = I_{\alpha} + I_{\beta}^\infty + I_{\lambda - \alpha - \beta}^\infty$. Moreover, Lemma \ref{lemma8} holds, and $I_{\alpha + \beta} < I_{\alpha}  + I_{\beta}^\infty$. Finally, we get
\[
	I_\lambda = I_{\alpha} + I_{\beta}^\infty + I_{\lambda - \alpha - \beta}^\infty > I_{\alpha + \beta} + I^\infty_{\lambda - \alpha - \beta} ,
\]
which contradicts the third point of Lemma \ref{lemma5}. \\
Therefore, it holds $\alpha = \lambda$, and, according to Lemma \ref{lemma3}, $\gamma_0$ is a minimizer for $I_\lambda$, which concludes the proof.


\section{Proofs of the lemmas}

\begin{proof}[Proof of lemma \ref{lemma1}]
	Let $\lambda > 0$ and $\gamma \in \cP_\lambda$. We use the representation (\ref{Coleman}) of $\gamma$, and write
\begin{equation*}
	\begin{aligned}
	\gamma^{\alpha \beta} (\br, \br') & = \sum_{k=1}^\infty n_k \phi_k^\alpha(\br) \overline{\phi_k^\beta(\br')}, 
	 \ 0 \le n_k \le 1, \ \sum_{k=1}^\infty n_k = \lambda, \\ 
	 & \Phi_k = \begin{pmatrix} \phi_k^\spinup \\ \phi_k^\spindown \end{pmatrix} \in L^2(\R^3, \C^2),  \ 
	  \bra \Phi_k | \Phi_l \ket = \delta_{kl}, \ \Tr( - \Delta \gamma) := \sum_{k=1}^\infty n_k \| \nabla \Phi_k \|_{L^2}^2 < \infty.
	 \end{aligned}
\end{equation*}
In particular, $ \rho^{\alpha \beta} (\br) = \sum n_k \phi_k^\alpha(\br) \overline{\phi_k^\beta(\br)}$. Differentiating this expression, and using the Cauchy-Schwarz inequality, it holds
\begin{align*}
	| \nabla \rho^{\alpha \beta} |^2 & = \left| \sum_{k=1}^\infty n_k \left( \nabla \phi_k^\alpha(\br) \overline{\phi_k^\beta(\br)} +  \phi_k^\alpha(\br) \nabla \overline{\phi_k^\beta(\br)} \right) \right|^2 \\
	& \le \left| \sum_{k=1}^\infty n_k \left( | \nabla \phi_k^\alpha |^2 + | \nabla \phi_k^\beta |^2 \right )^{1/2} \left( | \phi_k^\alpha |^2 + | \phi_k^\beta |^2 \right )^{1/2} \right|^2 \\
	& \le \left[ \sum_{k=1}^\infty n_k \left( | \nabla \phi_k^\alpha |^2 + | \nabla \phi_k^\beta |^2 \right ) \right]
		\left[ \sum_{k=1}^\infty n_k \left( | \phi_k^\alpha |^2 + | \phi_k^\beta |^2 \right ) \right] .
\end{align*}

We let $\tau^\alpha := \sum_{k=1}^\infty n_k | \nabla \phi_k^\alpha |^2$, so that $\tau^\alpha \in L^1(\R^3)$ and $\int_{\R^3} \tau^\alpha = \Tr( - \Delta \gamma^{\alpha \alpha})$. The previous inequality leads to the point-wise estimate
\begin{equation} \label{nabla_rho}
	| \nabla \rho^{\alpha \beta} |  \le \left( \tau^\alpha + \tau^\beta \right)^{1/2} \left( \rho^{\alpha \alpha} + \rho^{\beta \beta} \right)^{1/2}.
\end{equation}
In particular, if $\alpha = \beta$, we recover the Hoffman-Ostenhof inequality
\[
	\| \nabla \sqrt{\rho^{\alpha \alpha}} \|_{L^2}^2 \le \Tr( - \Delta \gamma^{\alpha \alpha}).
\]
With the Sobolev embedding $ H^1(\R^3) \hookrightarrow L^6(\R^3)$, we deduce
\[
	\| \rho^{\alpha \alpha} \|_{L^3} \le C \ \Tr(- \Delta \gamma^{\alpha \alpha} ) .
\]
Then, using the fact that $\left(\tau^\alpha + \tau^\beta \right)^{1/2} \in L^2(\R^3)$ and $\left(\rho^{\alpha \alpha} + \rho^{\beta \beta}\right)^{1/2} \in L^6(\R^3)$ and the H\"older inequality, it follows from (\ref{nabla_rho}) that
\begin{equation} \label{nabla_rho_2}
	\| \nabla \rho^{\alpha \beta} \|_{L^{3/2}} \le \| (\tau^\alpha + \tau^\beta )^{1/2} \|_{L^2} \  \| (\rho^{\alpha \alpha} + \rho^{\beta \beta})^{1/2} \|_{L^6} \le 4C \  \Tr( - \Delta \gamma) .
\end{equation}

For $\rho^{+/-}$, we use the exact expression of the eigenvalues of a $2 \times 2$ hermitian matrix:
\begin{equation} \label{rhopm}
	\rho^{+/-} = \dfrac12 \left(\rho \pm \sqrt{\rho^2 - 4 \det(R)} \right) = \dfrac12 \left( \rho \pm \sqrt{(\rho^{\spinup \spinup} - \rho^{\spindown \spindown})^2 + 4 | \rho^{\spinup \spindown} |^2} \right).
\end{equation}
Noticing that, if $f$ and $g$ are non negative,
\[
	| \nabla \sqrt{f + g} | = \dfrac{| \nabla f + \nabla g |}{2 \sqrt{ f + g}} \le \dfrac{| \nabla f | }{2 \sqrt{f + g}} + \dfrac{| \nabla g |}{2 \sqrt{f + g}} \le \dfrac{| \nabla f |}{2 \sqrt{f}} + \dfrac{| \nabla g|}{2 \sqrt{g}} = | \nabla \sqrt{f} | + | \nabla \sqrt{g} | ,
\]
we differentiate (\ref{rhopm}) to get
\begin{align*}
	 | \nabla \rho^{+/-}| 
	 	& \le \dfrac12 | \nabla \rho | + \dfrac12 \left| \nabla \sqrt{ (\rho^{\spinup \spinup} - \rho^{\spindown \spindown})^2 + 4 | \rho^{\spinup \spindown} |^2} \right| \\
		&  \le \dfrac12 | \nabla \rho^{\spinup \spinup}| + \dfrac12 | \nabla \rho^{\spindown \spindown} | + \dfrac12 \left( | \nabla \rho^{\spinup \spinup} | + | \nabla \rho^{\spindown \spindown}| + 2 \big| \nabla  | \rho^{\spinup \spindown}|  \big| \right) .
\end{align*}
All the terms on the right-hand side are in $L^{3/2} (\R^3)$ and of norm bounded by $C \Tr( - \Delta \gamma)$, hence the same holds for $\nabla \rho^{+/-}$.\\

Moreover, $\gamma$ is in $\cP_\lambda$, so that $\Tr( \gamma) = \int_{\R^3} \rho = \lambda$. We get from the inequality $2|ab| \le |a |^2 + |b|^2$ that
\begin{equation} \label{MajorationRho}
	\fl | \rho^{\alpha \beta} | =  \left| \sum_{k=1}^\infty n_k \phi_k^\alpha(\br) \overline{\phi_k^\beta(\br)} \right | \le \sum_{k=1}^\infty \dfrac{n_k}{2} \left( | \phi_k^\alpha |^2 + | \phi_k^\beta |^2 \right)  \le \sum_{k=1}^\infty n_k \left( | \phi_k^\spinup |^2 + | \phi_k^\spindown |^2 \right) = \rho.
\end{equation}
Integrating on $\R^3$ leads to $\| \rho^{\alpha \beta} \|_{L^1} \le \lambda$. From the positiveness of $R_\gamma$, it also holds that $0 \le \rho^{+/-} \le \rho$ so that $\| \rho^{+/-} \|_{L^1} \le \lambda$. We conclude from (\ref{nabla_rho_2}), the Sobolev embedding $W^{1, 3/2}(\R^3) \hookrightarrow L^3(\R^3)$, and the H\"older inequality with $1 \le p \le 3$, that
\[
	\| \rho^{\alpha \beta} \|_{L^p} \le C_p \lambda^{\frac{3-p}{2p}} \Tr (- \Delta \gamma)^{\frac{3(p-1)}{2p}}
\]
and similarly for $\rho^{+/-}$.
\end{proof}


\begin{proof}[Proof of Lemma \ref{lemma2}]
~\\
We prove that $I_\lambda > -\infty$. The proof is similar for $I_{\lambda}^\infty$. Let $\lambda > 0$, and $\gamma \in \cP_\lambda$. Under conditions (\ref{cond_g}), a straightforward calculation shows that
\begin{align*}
	\left| E_{\xc}^{\LSDA}(\rho^+, \rho^-) \right| & \le C \left( \int_{\R^3} (\rho^+)^{p^-} + \int_{\R^3} (\rho^+)^{p^+} \right) + C \left( \int_{\R^3} (\rho^-)^{p^-} + \int_{\R^3} (\rho^-)^{p^+} \right) \\
		& \le 2 C \left( \int_{\R^3} \rho^{p^+} + \int_{\R^3} \rho^{p^-} \right) ,
\end{align*}
where $p^{+/-} := 1 + \beta^{+/-} < 5/3$. We used the fact that $R_\gamma$ is a positive hermitian matrix, so that $0 \le \rho^{+/-} \le \rho$. Therefore, because $J(\rho) \ge 0$, we have the estimate:
	\[
		\cE(\gamma) \ge \frac12 \Tr( -\Delta \gamma) - C_1  \| U \|_{L^{\frac32 + \epsilon} + L^{\infty}} \| R \|_{L^{1} \cap L^{3-\epsilon'}} - C_2 \left( \| \rho \|_{L^{p^+}}^{p^+}  + \| \rho \|_{L^{p^-}}^{p^-} \right)  .
	\] 
	With Lemma \ref{lemma1}, it follows
	\[
		\cE(\gamma) \ge \frac12 \Tr( -\Delta \gamma) - \widetilde C_1 \| U \|_{L^{\frac32 + \epsilon} + L^{\infty}} \left( 1 +  \Tr(-\Delta \gamma)^{\alpha_1} \right) - C_2 \left(\Tr(-\Delta \gamma)^{\alpha_2} + \Tr(-\Delta \gamma)^{\alpha_3} \right)
	\]
	with $0 \le \alpha_1, \alpha_2, \alpha_3 < 1$.	The function $Y \mapsto Y - \widetilde C_1 (1+Y^{\alpha_1}) - C_2 Y^{\alpha_2} - C_2 Y^{\alpha_3}$ goes to $+\infty$ when $Y$ goes to $+ \infty$ for  $0 \le \alpha_1, \alpha_2, \alpha_3 < 1$. Hence, $\mathcal{\cE} \ge -C$ for all $\gamma \in \cP_\lambda$. It also follows from the above inequality that if $(\gamma_n)$ is a minimizing sequence for $I_\lambda$, then $\Tr( - \Delta \gamma_n)$ is uniformly bounded. In particular, $(\gamma_n)$ is a bounded sequence of $\cB$.

\end{proof}


\begin{proof}[Proof of Lemma \ref{lemma3}]
~\\
Let $(\gamma_n)_{n \in \N^*}$ be a bounded sequence in $\cB$. According to Lemma \ref{lemma1}, the sequences $(\rho^{\alpha \beta}_n)$ for $\alpha, \beta \in \{ \spinup, \spindown \}^2$ and $(\rho^{+/-}_n)$ are bounded in $W^{1, 3/2}(\R^3)$. In virtue of the Banach-Alaoglu theorem, up to a subsequence, the sequence $(\gamma_n)$ converges to some $\gamma_0 \in \cB$ for the weak-$*$ topology of $\cB$, and $(\rho^{\alpha \beta}_n)$ and $\rho^{+/-}_n$ converge for the weak topology of $W^{1,3/2}(\R^3)$. To identify the limits, we recall that, for any compact operator $K$ on $L^2(\R^3, \C^2)$,
\begin{equation} \label{K}
	\Tr( \gamma_n K) \xrightarrow[n \to \infty]{} \Tr(\gamma K) \quad \text{and} \quad \Tr(| \nabla | \gamma_n | \nabla | K ) \xrightarrow[n \to \infty]{}  \Tr(| \nabla | \gamma | \nabla | K ).
\end{equation}
Choose $W \in \cC^\infty_0(\R^3, \R)$. The operator $(1 + | \nabla |)^{-1} W (1 + | \nabla |)^{-1}$ is compact and in the Schatten class $\fs_p$ for $p > \frac32$ according to the Kato-Simon-Seiler inequality \cite{Simon1979}. Taking successively in (\ref{K})
\[
	K = \begin{pmatrix} W & 0 \\ 0 & 0 \end{pmatrix},  \quad K = \begin{pmatrix} 0 & 0 \\ 0 & W \end{pmatrix}, \quad K = \begin{pmatrix} 0 & W \\ W & 0 \end{pmatrix} \quad \text{and} \quad K = \begin{pmatrix} 0 & \ri W \\ -\ri W & 0 \end{pmatrix},
\]
we obtain that, for the first choice of $K$,
\begin{equation} \label{W}
\begin{aligned}
	\int_{\R^3} \rho_n^{\spinup \spinup} W & = \Tr( \gamma_n W) = \Tr \left( (1 + | \nabla |) \gamma_n (1 + | \nabla |) \cdot (1 + | \nabla |)^{-1} W (1 + | \nabla |)^{-1} \right) \\
		& \xrightarrow[n \to \infty]{} \Tr \left( (1 + | \nabla |) \gamma_0 (1 + | \nabla |) \cdot (1 + | \nabla |)^{-1} W (1 + | \nabla |)^{-1} \right) = \int_{\R^3} \rho_0^{\spinup \spinup} W
\end{aligned}
\end{equation}
and similarly for $\rho_0^{\spindown \spindown}$, $\Re (\rho_0^{\spinup \spindown})$ and $\Im( \rho_0^{\spinup \spindown})$. We deduce that $(\rho^{\alpha \beta}_n)$ converges to $\rho_0^{\alpha \beta}$ in $\cD'(\R^3, \C)$ for all $\alpha, \beta \in \{ \spinup, \spindown \}^2$. Identifying the limits, the convergences hold also weakly in $W^{1,3/2}(\R^3)$, strongly in $L^p_{\loc}(\R^3)$ for $1 \le p < 3$, and almost everywhere, in virtue of the Sobolev embedding theorem. From formula (\ref{rhopm}) and the pointwise convergence of $(\rho_n^{\alpha \beta})$ to $\rho_0^{\alpha \beta}$, we also deduce that $(\rho_n^{+/-})$ pointwise converges to $\rho_0^{+/-}$. Again, by identifying the limits, the convergence also holds weakly in $W^{1,3/2}(\R^3)$ and strongly in $L^p_{\loc}(\R^3)$ for $1 \le p < 3$. \\

Then, let $\chi \in \cC^\infty_0(\R)$ be a cut-off function such that $\chi(x) = 1$ if $| x | < 1$ and $\chi(x) = 0$ if $x \ge 2$. We take $W_A = \chi(x/A)$ in (\ref{W}), and let $A$ go to infinity to obtain that
\begin{equation} \label{CvL1}
	\rho_0^{\spinup \spinup} \in L^1(\R^3) \quad \text{and} \quad \int_{\R^3} \rho_0^{\spinup \spinup} \le \liminf_{n \to \infty} \int_{\R^3} \rho_n^{\spinup \spinup},
\end{equation}
and similarly for $\rho_0^{\spindown \spindown}$. 
Now, if $\gamma_n \in \cP_\lambda$ and $\gamma_0 \in \cP_\lambda$, we get
\[
	\lambda = \int_{\R^3} \rho_0 = \int_{\R^3} \rho_0^{\spinup \spinup} + \rho_0^{\spindown \spindown} \le  \int_{\R^3} \rho_n^{\spinup \spinup} + \rho_n^{\spindown \spindown} = \lambda,
\]
and the inequality (\ref{CvL1}) is an equality. Therefore, $(\rho_n)$ converges to $\rho_0$ strongly in $L^1(\R^3)$. We deduce from (\ref{MajorationRho}) and $0 \le \rho_n^{+/-} \le \rho_n$ that $\rho_n^{\spinup \spindown}$ and $\rho_n^{+/-}$ are bounded in $L^1(\R^3)$. A classical application of the dominated convergence theorem then leads to the fact that $\rho_n^{\alpha \beta}$ converges  to $\rho_0^{\alpha \beta}$ strongly in $L^1(\R^3)$ for $\alpha, \beta \in \{ \spinup, \spindown \}^2$, and that $\rho_n^{+/-}$ converges strongly to $\rho_0^{+/-}$ in $L^1(\R^3)$. Finally, the strong convergence still holds in $L^p(\R^3)$ for $1 \le p < 3$ according to the H\"older inequality. \\
\indent The proof for the energy is similar to the one in \cite[Lemma 3]{Anantharaman2009}. We do not repeat it here, but notice that the strong convergence of $(\rho_n^{+/-})$  to $\rho_0^{+/-}$ in $L^p(\R^3)$ for $1 \le p < 3$ is needed for the convergence of the exchange-correlation functional.

\end{proof}


\begin{proof} [Proof of Lemma \ref{lemma5}]
~\\
We first prove that there exists $\lambda_0$ small enough such that for all $0 < \lambda \le \lambda_0$, $I_\lambda^\infty < 0$. We use a scaling argument. Let $\phi \in \cC^\infty_0(\R^3, \C)$ be such that $\| \phi \|_{L^2} = 1$, and let $\phi_\sigma = \sigma^{3/2} \phi(\sigma \cdot)$ for $\sigma > 0$. Note that $\| \phi_\sigma \|_{L^2} = 1$. For $\lambda \le 1$, we introduce
\[
	\gamma_{\lambda \sigma}(\br, \br') = \lambda \begin{pmatrix} \phi_\sigma(\br) \overline{\phi_\sigma} (\br') & 0 \\ 0 & 0 \end{pmatrix}
\]
so that $\gamma_{\lambda \sigma} \in \cP_{\lambda}$ for all $0 < \lambda \le 1$ and $\sigma > 0$. Using (\ref{cond_g}), there exists $1 \le \alpha< 3/2$ such that $E_{\xc}^\LSDA(| \phi_{\lambda \sigma} |^2, 0) \le - C \lambda^\alpha \sigma^{3 (\alpha-1)}$. Direct calculations lead to
\begin{align*}
	\cE^\infty(\gamma_{\lambda \sigma}) & = \dfrac{ \lambda \sigma^2}{2} \int_{\R^3} | \nabla \phi |^2 + \lambda^2 \sigma J( | \phi |^2) + \int_{\R^3} E_{\xc}^\LSDA(| \phi_{\lambda \sigma} |^2, 0) \\
	 & \le \dfrac{\lambda \sigma^2}{2} \int_{\R^3} | \nabla \phi |^2 + \lambda^2 \sigma J( | \phi |^2) - C \lambda^\alpha \sigma^{3(\alpha-1)}.
\end{align*}
It is easy to check that under the condition $\alpha < 3/2$, there exists $\lambda_0 > 0$ such that for all $0 < \lambda \le \lambda_0$, there exists $\sigma$ such that $\cE(\gamma_{\lambda \sigma}) < 0$. In particular, $I_{\lambda}^\infty \le \cE^\infty(\gamma_{\lambda \sigma}) < 0$. \\


\textit{(ii)} We now prove that $I_\lambda < I_\lambda^\infty$, for all $\lambda > 0$. Let $(\gamma_n)$ be a minimizing sequence for $I_\lambda^\infty$.\\
We first suppose that
\[
	\forall A > 0, \quad \lim_{n \to \infty} \sup_{x \in \R^3} \int_{x + B_A} \rho_n = 0,
\]
where $B_A$ is the ball of radius $A$ centered at the origin. Because $(\rho_n)$ is bounded in $W^{1, 3/2}$ according to Lemma \ref{lemma2}, we deduce from \cite[Lemma I.1]{Lions1984} that $(\rho_n)$ converges to $0$ strongly in $L^p(\R^3)$ for $1 < p < 3$. Also, because of (\ref{MajorationRho}), the components of $R_n$ and its eigenvalues converge to $0$ strongly in $L^p(\R^3)$ for $1 < p < 3$. Similarly to \cite{Anantharaman2009}, we deduce that
\[
	\hspace{-2em} I_\lambda^\infty = \liminf_{n \to \infty} \cE^\infty(\gamma_n) = \liminf_{n \to \infty} \left\{ \dfrac12 \Tr( - \Delta \gamma_n) + J( \rho_n) + E_{\xc}^\LSDA(\rho^+_n, \rho^-_n) \right\} = \liminf_{n \to \infty} \dfrac12 \Tr( - \Delta \gamma_n) \ge 0
\]
which contradicts the first point. Therefore
\begin{equation} \label{eta}
	\exists A, \eta > 0, \quad \forall n \in \N, \quad \exists x_n \in \R^3, \quad \int_{x_n + B_A} \rho_n \ge \eta.
\end{equation}
Up to translations of the $\gamma_n$'s, we can assume without loss of generality that $x_n = 0$.

We now introduce $\widetilde \gamma_n$, the flipped version of $\gamma_n$ introduced in (\ref{transformation}). Using (\ref{ControlZeeman}) and the fact that $V(\br) \le - \dfrac{z_1}{\br}$, we get
\begin{align*}
	\cE(\gamma_n) + \cE(\widetilde \gamma_n) & = \Tr( - \Delta \gamma_n) + 2 J (\rho_n) + 2 E_{\xc}^\LSDA(\rho^+_n, \rho^-_n) + 2 \int_{\R^3} V \rho_n \\
		& = 2 \cE^\infty(\gamma_n) + 2 \int_{\R^3} V \rho_n 
		\le 2\cE^\infty(\gamma_n) - 2 \int_{B_R} \dfrac{z_1}{| \br |} \rho_n \le 2\cE^\infty(\gamma_n) - 2 \dfrac{z_1}{R} \eta .
\end{align*}
Hence, either $\cE(\gamma_n)$ or $\cE(\widetilde \gamma_n)$ is smaller than $\cE^\infty(\gamma_n ) - z_1 R^{-1} \eta$. Therefore, $I_\lambda \le I_\lambda^\infty - z_1 R^{-1} \eta < I_\lambda^\infty$. \\


\textit{(iii)} Let us prove that for $0 < \mu < \lambda$, it holds that $I_\lambda \le I_\mu + I_{\lambda - \mu}^\infty$. Let $\varepsilon > 0$, $\gamma \in \cP_\mu$ and $\gamma' \in \cP_{\lambda - \mu}$ be such that $I_\mu \le \cE(\gamma) \le I_\mu + \varepsilon$ and $I^\infty_{\lambda - \mu} \le \cE^\infty(\gamma') \le I^\infty_{\lambda - \mu} + \varepsilon$. By density of finite-rank one-body density matrices in $\cB$, and density of $\cC^\infty_0 (\R^3, \C^2)$ in $H^1(\R^3, \C^2)$, we can assume that $\gamma$ and $\gamma'$ are both of the form
\[
	\gamma^{(')} = \sum_{i=1}^M n_k^{(')} | \Phi_k^{(')} \ket \ \bra \Phi_k^{(')} | \quad \text{with} \quad \Phi_k^{(')} \in \cC^\infty_0(\R^3, \C^2).
\]
We consider $\gamma_n := \gamma + \tau_{n\be} \gamma ' \tau_{-n \be} \in \cP_\lambda$ and $\gamma_n^\sharp := \gamma + \tau_{n\be} \widetilde \gamma' \tau_{-n \be} \in \cP_\lambda$ where $\tau_{\bx} f (\br) = f( \br - \bx)$, and $\be$ is a non-null vector. We recall that $\widetilde \gamma'$ is the flipped transformation of $\gamma'$, as introduced in (\ref{transformation}). For $n_0$ large enough, and for $n \ge n_0$, the supports of the $\Phi_k$'s and of the $\tau_{n\be} \Phi_k'$'s are disjoint, so that $\gamma_n$ and $\gamma_n^\sharp$ are in $\cP_\lambda$ for all $n \ge n_0$. Also, for $n$ large enough, $J(\rho_n) \le J(\rho) + J(\rho') + \varepsilon$. Altogether, we get, for $n$ large enough,
\begin{align*}
	\cE(\gamma_n) + \cE( \gamma_n^\sharp) & = 2 \cE(\gamma) + 2 \cE^\infty(\gamma') + 2 \int V \rho'(\cdot - n \be) + 2  \varepsilon \le 2 \cE(\gamma) + 2 \cE^\infty(\gamma') + 2 \varepsilon \\
		& \le 2 I_\mu + 2 I_{\lambda - \mu}^\infty + 6 \varepsilon .
\end{align*}
As before, either $\cE(\gamma_n)$ or $\cE(\gamma_n^\sharp)$ is smaller than  $I_\mu + I_{\lambda - \mu} + 3 \varepsilon$, hence $I_\lambda \le I_\mu + I_{\lambda - \mu}$. \\


 \textit{(iv)} and \textit{(i)} The fact that $\lambda \mapsto I_\lambda$ and $\lambda \mapsto I_\lambda^\infty$ are non increasing, and that $I_{\lambda}^\infty < 0$ and $I_\lambda < 0$ for all $\lambda > 0$ can be read from the other statements.\\

\end{proof}


\begin{proof} [Proof of Lemma \ref{lemma6}]
~\\
	Let $\lambda > 0$, and let $(\gamma_n)_{n \in \N^*} \in \cP_\lambda$ be a minimizing sequence for $I_\lambda$. According to Lemma \ref{lemma2}, up to a subsequence, we can assume that $(\gamma_n)$ converges to some $\gamma_0 \in \cB$ for the weak-$*$ topology of $\cB$. \\
	
	\textit{(i)} The fact that $\alpha \le \lambda$ can be directly deduced from (\ref{CvL1}). \\
	
	\textit{(ii)} Suppose that $\alpha = 0$, so that $\gamma = 0$. Then, we have $I_\lambda = \liminf \cE( \gamma_n) = \cE(\gamma_0) = 0$ (we used the continuity of $\cE$, which can be proved similarly to \cite{Anantharaman2009}). This contradicts the first point of Lemma \ref{lemma4}. Hence, $\alpha \neq 0$. \\
	
	\textit{(iii)} Suppose that $0 < \alpha < \lambda$. Following \cite{Anantharaman2009,Frank2007}, we let $\chi, \xi \in \cC^\infty_0(\R^3, \R^+)$ be radial functions such that $\chi^2 + \xi^2 = 1$, with $\chi(0) = 1$, $\chi < 1$ on $\R^3 \setminus{\{0\}}$, $\chi(x) = 0$ for $| x | > 1$, $\| \nabla \chi \|_{L^\infty} \le 2$ and $\| \nabla \xi \|_{L^\infty} \le 2$. We introduce $\chi_A (x) := \chi( x /A ) $ and $\xi_A(x) := \xi(x / A)$ and finally $\gamma_{n,A} := \chi_A \gamma_n \chi_A$. With those notations, $A \mapsto \Tr (\gamma_{n, A})$ is a continuous and increasing function from $0$ to $\lambda$. Therefore, there exists $A_n$ such that $\gamma_{n, A_n}$ is in $\cP_\alpha$. \\
 \indent The sequence $(A_n)$ goes to infinity. Otherwise, we would have for $A$ large enough and according to~(\ref{CvL1}),
\[
	\int_{\R^3} \rho_0 \chi^2_A = \lim_{n \to \infty} \int_{\R^3} \rho_n \chi^2_A \ \ge \lim_{n \to \infty}  \int_{\R^3} \rho_n \chi^2_{A_n} \  = \alpha = \int_{\R^3} \rho_0
\]
which is impossible, for $| \chi^2_A| < 1$ on $\R^3$. \\

We introduce $\gamma_{1,n} := \chi_{A_n} \gamma_n \chi_{A_n}$ and $\gamma_{2,n} := \xi_{A_n} \gamma_{ n} \xi_{A_n}$. Note that $\gamma_{1,n} \in \cP_{\alpha}$ and $\gamma_{2,n} \in \cP_{\lambda - \alpha}$, and that $\rho_n = \rho_{1,n} + \rho_{2,n}$. Also, direct calculations lead to
\begin{equation} \label{KE}
	\Tr(-\Delta \gamma_{1,n}) + \Tr( - \Delta \gamma_{2,n}) \le \Tr ( - \Delta \gamma_n) + 8 \dfrac{\lambda}{A_n^2}.
\end{equation}
Hence, $(\gamma_{1,n})$ and $(\gamma_{2,n})$ are bounded in $\cB$. According to Lemma \ref{lemma3}, up to a subsequence, $(\gamma_{1,n})$ converges for the weak-$*$ topology of $\cB$. In this case, for $\Phi = (\phi^\spinup, \phi^\spindown) \in \cC^\infty_0(\R^3, \C^2)$, it holds that
\[
	\Tr ( \gamma_{1,n} | \Phi \ket | \bra \Phi | ) = \int_{\R^3} \rho_{1,n}^{\spinup \spinup} | \phi^\spinup |^2 + \int_{\R^3} \rho_{1,n}^{\spindown \spindown} | \phi^\spindown |^2 = \int_{\R^3} \chi_{A_n}^2 \rho_n^{\spinup \spinup} | \phi^\spinup |^2 +  \int_{\R^3} \chi_{A_n}^2 \rho_n^{\spindown \spindown} | \phi^\spindown |^2 .
\]
For $n$ large enough, the support of $\Phi$ is inside the support of $\chi_{A_n}$, and
\[
	\Tr ( \gamma_{1,n} | \Phi \ket  \bra \Phi | )  = \Tr( \gamma_n | \chi_{A_n} \Phi \ket \bra \Phi \chi_{A_n}|)    \xrightarrow[n \to \infty]{} \Tr ( \gamma | \Phi \ket | \bra \Phi |).
\]
We deduce that $(\gamma_{1,n})$ converges to $\gamma_0$ for the weak-$*$ topology of $\cB$. Finally, because $\gamma_{1,n} \in \cP_\alpha$ and $\gamma_0 \in \cP_\alpha$, $\rho_{1,n}$ converges strongly to $\rho_0$ in $L^p(\R^3)$ for $1 \le p < 3$, and $\cE(\gamma_0) \le \liminf \cE(\gamma_{1,n})$ according to Lemma \ref{lemma3}.\\

Let us look more closely to $\gamma_{2,n}$. 
Because $(\rho_{1,n})$ converges to $\rho_0$ strongly in $L^p(\R^3)$ and $(\rho_n)$ converges to $\rho_0$ strongly in $L^p_{\loc}(\R^3)$ for $1 \le p < 3$, we obtain that $\rho_{2,n} = \rho_n - \rho_{1,n}$  (and thus all the components of $R_{2,n}$ and its eigenvalues) converges strongly to $0$ in $L^p_{\loc}(\R^3)$ for $1 \le p < 3$. Also, it holds that $\rho_{1,n}^{+/-} + \rho_{2,n}^{+/-} = \rho_n^{+/-} $. Using (\ref{KE}) and the fact that $\iint \rho_{1,n}(\br) \rho_{2,n}(\br') | \br- \br' |^{-1} \rd^3 \br \rd^3 \br' \ge 0$, we obtain
\begin{align*}
	\cE(\gamma_n) & = \dfrac12 \Tr( - \Delta \gamma_n) + J(\rho_n) + \int_{\R^3} \tr_{\C^2} \left[ U R_n \right] + E_{\xc}^\LSDA(\rho_n^+, \rho_n^-) \\
	& \ge \dfrac12 \Tr ( - \Delta \gamma_{1,n}) + \dfrac12 \Tr ( - \Delta \gamma_{2,n}) - 4 \dfrac{\lambda}{A_n^2} + J(\rho_{1,n}) + J(\rho_{2,n}) + \\
	& + \int_{\R^3}\tr_{\C^2} \left[ U R_{1,n} \right] + \int_{\R^3} \tr_{\C^2} \left[ U R_{2,n} \right] 
	 + E_{\xc}^\LSDA (\rho_{1,n}^+ + \rho_{2,n}^+, \rho_{1,n}^- + \rho_{2,n}^-) \\
	& \ge \cE(\gamma_{1,n}) + \cE^\infty(\gamma_{2,n}) - 4 \dfrac{\lambda}{A_n^2} + \int_{\R^3} \tr_{\C^2} \left[ U R_{2,n} \right] + \\
	& \quad + E_{\xc}^\LSDA (\rho_{1,n}^+ + \rho_{2,n}^+, \rho_{1,n}^- + \rho_{2,n}^-) - E_{\xc}^\LSDA(\rho_{1,n}^+, \rho_{1,n}^-) - E_{\xc}^\LSDA(\rho_{2,n}^+, \rho_{2,n}^-).
\end{align*}
We first consider the term $\int \tr_{\C^2} \left[U R_{2,n} \right]$. We have for $A \ge 0$, (we use, for a matrix $M$, the notation $| M |$ for the sum of the absolute values of the entries of $M$)
\begin{align*}
	\left| \int_{\R^3} \tr_{\C^2} \left[U R_{2,n} \right] \right| & = \left| \int_{B_A} \tr_{\C^2} \left[ U R_{2,n} \right] \right| + \left| \int_{(B_A)^c} \tr_{\C^2} \left[ U R_{2,n} \right] \right| \\
		& \le \| U \|_{L^{3/2 + \epsilon} + L^\infty(B_A)} \| R_{2,n} \|_{L^1 \cap L^{3-\epsilon'} (B_A)} + \sup_{x \in (B_A)^c} |U(x)| \int_{(B_A)^c} | R_{2,n}|  \\
		& \le  \| U \|_{L^{3/2 + \epsilon} + L^\infty(\R^3)} \| R_{2,n} \|_{L^1 \cap L^{3-\epsilon'} (B_A)} + \sup_{x \in (B_A)^c} |U(x)| \int_{\R^3} | R_{2,n} |.
\end{align*}
Using inequality (\ref{MajorationRho}), and the fact that $\int \rho_{2,n}^{\alpha \beta} \le \lambda$, we get an inequality of the form
\[
	\left| \int_{\R^3} \tr_{\C^2} \left[U R_{2,n} \right] \right| \le C_1 \| R_{2,n} \|_{L^1 \cap L^{3-\epsilon'} (B_A)} + C_2  \sup_{x \in (B_A)^c} |U(x)|
\]
with $C_1$ and $C_2$ independent of $A$ and $n$. Because all entries of $U$ are vanishing at infinity, we can first choose $A$ large enough to control the second term, and then use the convergence of $R_{2,n}$ to $0$ strongly in $L^p(B_A)$ for $1 \le p < 3$, to establish the convergence of the right-hand-side to $0$. \\
For the last term, using (\ref{cond_g}), it holds (we write $g_2(\rho) = g(2\rho)$)

\begin{equation} \label{boundEXC}
\begin{aligned}
	& E_{\xc}^\LSDA (\rho_{1,n}^+ + \rho_{2,n}^+, \rho_{1,n}^- + \rho_{2,n}^-) - E_{\xc}^\LSDA(\rho_{1,n}^+, \rho_{1,n}^-) - E_{\xc}^\LSDA(\rho_{2,n}^+, \rho_{2,n}^-) = \\
	& \quad \dfrac12 \left[ \int_{\R^3} \left( g_2(\rho_{1,n}^+ + \rho_{2,n}^+) - g_2(\rho_{1,n}^+) - g_2( \rho_{2,n}^+) \right) + \int_{\R^3} g_2(\rho_{1,n}^- + \rho_{2,n}^-) - g_2(\rho_{1,n}^-) - g_2(\rho_{2,n}^-)      \right]	.
\end{aligned}
\end{equation}
	Then, we get (dropping the super-script $+/-$ for the sake of clarity)
\begin{align*}
	& \left| \int_{\R^3}g_2(\rho_{1,n} + \rho_{2,n}) - g_2(\rho_{1,n}) - g_2( \rho_{2,n}) \right|   \\
	& \quad \le  \int_{B_A} \left|  g_2(\rho_{1,n} + \rho_{2,n}) - g_2(\rho_{1,n}) \right| + \int_{B_A} | g_2 (\rho_{2,n}) |  + \\
	& \quad + \int_{(B_A)^c} \left|  g_2(\rho_{1,n} + \rho_{2,n}) - g_2(\rho_{2,n}) \right| + \int_{(B_A)^c} | g_2 (\rho_{2,n}) | \\
	& \quad \le C \left( \int_{B_A} \rho_{2,n} \left( \rho_n^{p^+} + \rho_n^{p^-} \right)
	 +  \int_{B_A} \left( (\rho_{2,n})^{p^-} + (\rho_{2,n})^{p^+}\right) \right)\\
	& \quad  + C \left( \int_{(B_A)^c} \rho_{1,n} \left( \rho_n^{p^+} + \rho_n^{p^-} \right) 
	 +  \int_{(B_A)^c} \left( (\rho_{1,n})^{p^-} + (\rho_{1,n})^{p^+} \right) \right).
\end{align*}
We recall that $p^{+/-} = 1 + \beta^{+/-} < 5/3$. Because $(\rho_{1,n})$ and $(\rho_n)$ are bounded in $L^p(\R^3)$ for $1 \le p < 3$, and because $(\rho_{2,n})$ converges to $0$ in $L^p_{\loc}(\R^3)$ for $1 \le p < 3$, we deduce that (\ref{boundEXC}) goes to $0$ when $n$ goes to infinity (first take $A$ large enough, then $n$ large enough, as before).\\
Altogether, for $\epsilon > 0$, for $n$ large enough,
\[
	\cE(\gamma_n) \ge \cE(\gamma_{1,n}) + \cE^\infty(\gamma_{2,n}) - 3 \epsilon \ge I_\alpha + I^\infty_{\lambda - \alpha} - 3 \epsilon .
\]
Therefore, $\cE(\gamma_n) \ge I_\alpha + I^\infty_{\lambda - \alpha}$, and $I_\lambda \ge I_\alpha + I^\infty_{\lambda - \alpha}$. The third point of Lemma \ref{lemma4} states that $I_\lambda \le I_\alpha + I_{\lambda - \alpha}^\infty$. Hence $I_\lambda = I_\alpha + I^\infty_{\lambda - \alpha}$, and $(\gamma_{2,n})$ is a minimizing sequence for $I_{\lambda - \alpha}^\infty$.

As in the proof of Lemma \ref{lemma4}, we have (\ref{eta}):
\[
	\exists A, \eta > 0, \quad \forall n \in \N, \quad \exists x_n \in \R^3, \quad \int_{x_n + B_A} \rho_{2,n} \ge \eta.
\]
We let $\gamma_{2,n}' = \tau_{x_n} \gamma_{2,n} \tau_{-x_n}$. Then, $(\gamma_{2,n})$ is bounded for the weak-$*$ topology of $\cB$, and converges, up to a subsequence, to some $\gamma_0'$ satisfying $\Tr(\gamma_0') \ge \eta$. Let $\beta := \Tr( \gamma_0')$. We can repeat the same arguments as before and truncate $\gamma_{2,n}'$ to ensure that $\Tr( \chi_{A_n} \gamma_{2,n} \chi_{A_n}) = \beta$. We deduce as before that $\gamma_0'$ is a minimizer for $I_{\beta}^\infty$, and that $I_\lambda = I_\alpha + I_\beta^\infty + I_{\lambda - \alpha - \beta}^\infty$.
\end{proof}


\begin{proof} [Proof of Lemma \ref{lemma7}]
~\\
Let us first derive the expression of $H_{\gamma_0}$. Suppose that $\gamma_0 \in \cP_\lambda$ is a minimizer for $I_\lambda$. Then for $\gamma \in \cP_\lambda$ and $0 \le t \le 1$, it holds $\cE(t \gamma + (1 - t) \gamma_0) \ge \cE(\gamma_0)$. In particular, one must have
\begin{equation}  \label{condGamma0}
	\dfrac{\partial \cE(t \gamma + (1 - t) \gamma_0)}{\partial t} \Big|_{t=0} \ge 0 .
\end{equation}
To perform the calculations, we use the explicit formula (\ref{rhopm}) for $\rho^{+/-}$, and get
\begin{align*}
	& \dfrac{\partial \left( t \rho  + (1 - t) \rho_0 \right)^{+/-}}{\partial t} \Big|_{t=0} = \\
	& \qquad \dfrac12 \tr_{\C^2} \left( \left[ \begin{pmatrix} 1 & 0 \\ 0 & 1 \end{pmatrix} 
	\pm  
	\dfrac{1}{ \sqrt{(\rho_0^{\spinup \spinup} - \rho_0^{\spindown \spindown})^2 + 4 | \rho_0^{\spinup \spindown} |^2}} \begin{pmatrix} \rho_0^{\spinup \spinup} - \rho_0^{\spindown  \spindown} & 2 \rho_0^{\spinup \spindown} \\ 2 \rho_0^{\spindown \spinup} & \rho_0^{\spindown \spindown} - \rho_0^{\spinup \spinup} \end{pmatrix} \right] 
	(R - R_0) \right) .
\end{align*}
Similarly to \cite{Anantharaman2009,Cances2008}, we conclude that 
\[
	\dfrac{\partial \cE(t \gamma + (1 - t) \gamma_0)}{\partial t} \Big|_{t=0} = \Tr \left( H_{\gamma_0} (\gamma - \gamma_0) \right)
\]
with
\begin{equation}  \label{Hgamma}
\begin{aligned}
	H_{\gamma_0} = & \left( -\frac12 \Delta + \rho_0 \star | \cdot |^{-1} \right) \b1_2 + U \\
		& + \dfrac{g'(\rho_0^+)}{2}   \left[ \begin{pmatrix} 1 & 0 \\ 0 & 1 \end{pmatrix} 
	+  
	\dfrac{1}{ \sqrt{(\rho_0^{\spinup \spinup} - \rho_0^{\spindown \spindown})^2 + 4 | \rho_0^{\spinup \spindown} |^2}} \begin{pmatrix} \rho_0^{\spinup \spinup} - \rho_0^{\spindown  \spindown} & 2 \rho_0^{\spinup \spindown} \\ 2 \rho_0^{\spindown \spinup} & \rho_0^{\spindown \spindown} - \rho_0^{\spinup \spinup} \end{pmatrix} \right]   \\
		& + \dfrac{g'(\rho_0^-)}{2}  \left[ \begin{pmatrix} 1 & 0 \\ 0 & 1 \end{pmatrix} 
	-
	\dfrac{1}{ \sqrt{(\rho_0^{\spinup \spinup} - \rho_0^{\spindown \spindown})^2 + 4 | \rho_0^{\spinup \spindown} |^2}} \begin{pmatrix} \rho_0^{\spinup \spinup} - \rho_0^{\spindown  \spindown} & 2 \rho_0^{\spinup \spindown} \\ 2 \rho_0^{\spindown \spinup} & \rho_0^{\spindown \spindown} - \rho_0^{\spinup \spinup} \end{pmatrix} \right] .
\end{aligned}
\end{equation}
Using (\ref{condGamma0}), we deduce that $\gamma_0 \in \textrm{arginf} \{ \Tr (H_{\gamma_0} \gamma), \gamma \in \cP_\lambda\}$. Finally,
\[
	\gamma_0 = \b1 \left( H_{\gamma_0} < \epsilon_F \right) + \delta \quad \text{with} \quad \delta \subset \textrm{Ker}(H_{\gamma_0} - \epsilon_F),
\]
where $\epsilon_F$ is the Fermi energy, determined by the condition $\Tr (\gamma_0) = \lambda$. \\

Let us first calculate the essential spectrum of $H_{\gamma_0}$. We recall that $H_0 = - \dfrac12 \Delta \b1_2$ has domain $H^2(\R^3, \C^2)$ and that if $u \in H^2(\R^3, \C)$, then $u$ vanishes at infinity. We also recall that for all $V \in L^{3/2}(\R^3, \C^2) + L^\infty_\epsilon( \R^3, \C^2)$, the set of functions $V$ that can be written $V = V_{3/2} + V_\infty$ with $V_{3/2} \in L^{3/2}(\R^3, \C^2)$,  $V_\infty \in L^\infty(\R^3)$ and $\|V_\infty \|_{L^\infty}$ arbitrary small, $V$ is a compact perturbation of $H_0$. In our case, we can easily check that $\widehat{\rho_0 \star | \cdot |^{-1}} = \widehat{\rho_0} | \cdot |^{-2} \in L^1(\R^3)$, so that $\rho_0 \star | \cdot |^{-1}$ vanishes at infinity. Altogether,
\begin{align*}
	\bullet & \quad \rho_0 \star | \cdot |^{-1} \in L^{3/2}(\R^3) + L^\infty_\epsilon( \R^3) \\
	\bullet & \quad U \in L^{3/2}(\R^3, \C^2) + L^\infty( \R^3, \C^2) \quad \text{and  all entries of $U$ vanishes at infinity}\\
	\bullet & \quad | g'(\rho_0^{+/-}) | \le C (\rho_0^{\beta^-} + \rho_0^{\beta^+}) \quad \text{hence} \quad g'(\rho_0^{+/-}) \in L^{3/2}(\R^3, \C^2).
\end{align*}
Therefore, according to the Weyl's theorem, the domain of $H_{\gamma_0}$ is $H^2(\R^3, \C^2)$, and $\sigma_{\textrm{ess}}(H_{\gamma_0}) = \sigma_{\textrm{ess}}(H_0) = [0, + \infty[$.

Let us now prove that $H_{\gamma_0}$ has infinitely many negative eigenvalues whenever $\lambda < Z$. First notice that the matrix
\[
	\dfrac{1}{ \sqrt{(\rho_0^{\spinup \spinup} - \rho_0^{\spindown \spindown})^2 + 4 | \rho_0^{\spinup \spindown} |^2}} \begin{pmatrix} \rho_0^{\spinup \spinup} - \rho_0^{\spindown  \spindown} & 2 \rho_0^{\spinup \spindown} \\ 2 \rho_0^{\spindown \spinup} & \rho_0^{\spindown \spindown} - \rho_0^{\spinup \spinup} \end{pmatrix}
\]
has two eigenvalues, respectively $-1$ and $1$, so that the matrices appearing into the two pairs of brackets in (\ref{Hgamma}) have $0$ and $2$ as eigenvalues, and therefore are hermitian positive. Also, recall that under the conditions (\ref{cond_g}) on $g$, it holds $g' \le 0$. Altogether, for $\psi \in \cC^\infty_0 (\R^3, \C)$, $\Psi = (\psi, \psi)^T \in \cC^\infty_0(\R^3, \C^2)$, and $\widetilde \Psi$ defined as in (\ref{transformation}), it holds that
\begin{align*}
	\bra \Psi | H_{\gamma_0} | \Psi \ket + \bra \widetilde \Psi | H_{\gamma_0} | \widetilde \Psi \ket 
		& \le \left\bra \Psi \big| \left(  - \left( \frac12 \Delta + \rho_0 \star | \cdot |^{-1} \right) \b1_2 + U \right) \big| \Psi \right\ket \\
		&  \quad +  \left\bra \widetilde \Psi \big| \left(  - \left( \frac12 \Delta + \rho_0 \star | \cdot |^{-1} \right) \b1_2 + U \right) \big| \widetilde \Psi \right\ket \\
	& \le 4 \left\bra \psi \big| - \frac12 \Delta + \rho_0 \star | \cdot |^{-1} + V \big| \psi \right\ket = \bra \psi | H_1 | \psi \ket_1
\end{align*}
where $H_1 := - \frac12 \Delta + \rho_0 \star | \cdot |^{-1} + V$ acts on $L^2(\R^3, \C)$, and $V$ is defined in (\ref{V}). We used the subscript $1$ to emphasize that $\bra \cdot | \cdot \ket_1$ is the scalar product on $L^2(\R^3, \C)$, whereas $\bra \cdot | \cdot \ket$ is the one on $L^2(\R^3, \C^2)$. In virtue of \cite[Lemma 2.1]{Lions1987}, the operator $H_1$ has infinitely many negative eigenvalues of finite multiplicity whenever $\lambda < Z$. So has $H_{\gamma_0}$ by the min-max principle. Eventually, $\epsilon_F < 0$, and
\begin{equation} \label{gammaProj}
	\gamma_0 = \sum_{i=1}^{N_1} | \Phi_i \ket \ \bra \Phi_i | + \sum_{i=N_1+1}^{N_2} n_i | \Phi_i \ket \ \bra \Phi_i | \quad \text{with} \quad \bra \Phi_i | \Phi_j \ket = \delta_{ij}  \quad \text{and}  \quad H_{\gamma} \Phi_i = \epsilon_i \Phi_i .
\end{equation}
It holds $\epsilon_i < \epsilon_F$ if $i \le N_1$, and $\epsilon_i = \epsilon_F$ if $N_1 + 1 \le i \le N_2$. In the following, we set $n_i := 1$ for $ i \le N_1$.\\

Finally, we prove that all eigenvectors associated with negative eigenvalues are exponentially decreasing. 
 Any function $u$ satisfying $H_{\gamma_0} u = \lambda u$ is in $H^2(\R^3, \C^2)$, and each component of $u$ vanishes at infinity. As a byproduct, we obtain that $\rho = \sum_{i=1}^{N_2} n_i | \Phi_i |^2$ also vanishes at infinity. Finally, all the components of
\[
	\hspace{-3em} U_{\gamma_0} := \rho_0 \star | \cdot |^{-1} \b1_2 + U + \sum_{\delta = +/-} \dfrac{g'(\rho_0^\delta)}{2}   \left[ \begin{pmatrix} 1 & 0 \\ 0 & 1 \end{pmatrix} 
	+  (-1)^\delta
	\dfrac{1}{ \sqrt{(\rho_0^{\spinup \spinup} - \rho_0^{\spindown \spindown})^2 + 4 | \rho_0^{\spinup \spindown} |^2}} \begin{pmatrix} \rho_0^{\spinup \spinup} - \rho_0^{\spindown  \spindown} & 2 \rho_0^{\spinup \spindown} \\ 2 \rho_0^{\spindown \spinup} & \rho_0^{\spindown \spindown} - \rho_0^{\spinup \spinup} \end{pmatrix} \right]
\]
vanish at infinity. Recall that $H_{\gamma_0} \Phi_i = - \frac12 \Delta \Phi_i + U_\gamma \Phi_i =  \epsilon_i \Phi_i$. Multiplying this equation by $\Phi_i$ and adding all the terms with  prefactors $n_i$, it holds that
\begin{equation} \label{eqtRho1}
	\sum_{i=1}^{N_2} n_i \Phi_i^T \left( -\dfrac12 \Delta \right) \Phi_i + \sum_{i=1}^{N_2} n_i \Phi_i^T U_\gamma \Phi_i = \sum_{i=1}^{N_2} \epsilon_i n_i | \Phi_i |^2 .
\end{equation}
From the relation $\rho_0 = \sum_{i=1}^{N_2} n_i | \Phi_i |^2$, we get
\[
	\Delta \rho_0 = \sum_{i=1}^{N_2} 2 n_i \left( \Phi_i^T (\Delta \Phi_i) + | \nabla \Phi_i |^2 \right)
\]
and (\ref{eqtRho1}) becomes
\[
	- \dfrac{\Delta}{4} \rho_0 + \underbrace{\sum_{i=1}^{N_2} \dfrac{n_i}{2} | \nabla \Phi_i |^2}_{\ge 0}  + \sum_{i=1}^{N_2} n_i \Phi_i^T U_\gamma \Phi_i + \underbrace{\sum_{i=1}^{N_2} (\epsilon_F - \epsilon_i) n_i | \Phi_i |^2}_{\ge 0} - \epsilon_F \rho = 0 .
\]
Let $A$ be large enough such that, for all $\br \in \R^3$ with $| \br | \ge A$, the eigenvalues of the matrix $U_\gamma(\br)$ are between $\dfrac{\epsilon_F}{2\lambda }$ and $-\dfrac{\epsilon_F}{2 \lambda}$ (recall that $\epsilon_F < 0$). In particular, for $| \br | \ge A$, $| \Phi_i^T(\br) U_\gamma (\br) \Phi_i(\br) | \le -\dfrac{\epsilon_F}{2\lambda} | \Phi_i |^2$, and, on $(B_A)^c$,
\[
	- \dfrac{\Delta}{4} \rho + \dfrac{\epsilon_F \lambda}{2\lambda} \rho - \epsilon_F \rho \le 0 \quad \text{or} \quad -\dfrac{\Delta}{2} \rho - \epsilon_F \rho_0 \le 0.
\]
We easily deduce that $\rho_0$ decreases exponentially. Hence, the same holds true for all the $\Phi_i$'s with $1 \le i \le N_2$. A similar proof can be used for the remaining negative eigenvalues.

\end{proof}


\begin{proof} [Proof of Lemma \ref{lemma8}]
~\\
Let $\gamma_0 \in \cP_\alpha$ be a minimizer for $I_\alpha$, and $\gamma_0' \in \cP_\beta$ be a minimizer for $I_\beta^\infty$. According to Lemma \ref{lemma6}, because $\alpha < \lambda$, $\gamma_0$ has the form
\[
	\gamma_0 = \sum_{i=1}^{N_2} n_i | \Phi_i \ket \ \bra \Phi_i | \quad \text{with} \quad H_{\gamma_0} \Phi_i = \epsilon_i \Phi_i \quad \text{and} \quad \epsilon_i \le \epsilon_F < 0 .
\]
We can derive a similar expression for $\gamma_0'$, as in the proof of Lemma \ref{lemma6}:
\begin{equation} \label{gamma'}
	\gamma_0' = \sum_{i=1}^\infty n_i' | \Phi_i' \ket \ \bra \Phi_i' | \quad \text{with} \quad H^\infty_{\gamma_0'} \Phi_i' = \epsilon_i \Phi_i' \quad \text{and} \quad \epsilon_i' \le \epsilon_F' \le 0,
\end{equation}
where $H^\infty_{\gamma_0'}$ has a similar expression as $H_{\gamma_0'}$ in (\ref{Hgamma}), without the $U$ term. Note that in (\ref{gamma'}), we do not know whether $\epsilon_F' < 0$ or $\epsilon_F' = 0$. \\

First assume that $\epsilon_F' < 0$, so that $\Phi_i$ and $\Phi_i'$ are exponentially decreasing, and the sum in (\ref{gamma'}) is finite. We introduce $ \gamma_n := \min \{ 1, \| \gamma_0 + \tau_n \gamma_0' \tau_{-n} \|^{-1} \} \left( \gamma_0 + \tau_n \gamma_0' \tau_{-n} \right)$ and $ \gamma_n^\sharp := \min \{ 1, \| \gamma_0 + \tau_n \widetilde \gamma_0' \tau_{-n} \|^{-1} \} \left( \gamma_0 + \tau_n \widetilde \gamma_0' \tau_{-n} \right)$, where $\widetilde \gamma_0'$ is the flipped transformation of $\gamma_0'$, as defined in (\ref{transformation}). Note that $\Tr( \gamma_n) \le \alpha + \beta$ and $\Tr( \gamma_n^\sharp) \le \alpha + \beta$, so that $I_{\alpha + \beta} \le \cE(\gamma_n)$ and $I_{\alpha + \beta} \le \cE(\widetilde \gamma)$ according to the fourth assertion of Lemma \ref{lemma5}. A straightforward calculation leads to
\begin{align*}
	\cE(\gamma_n) + \cE(\gamma_n^\sharp) & = 2 \cE(\gamma_0) + 2 \cE^\infty(\widetilde \gamma_0) - \dfrac{\beta(Z - \alpha)}{n} + O(\re^{-\delta n})  \\
		& = 2 I_\alpha + 2 I^\infty_{\beta} - \dfrac{\beta(Z - \alpha)}{n} + O(\re^{-\delta n}).
\end{align*}
For $n$ large enough, $-\beta(Z - \alpha) n^{-1} + O(\re^{\delta n})$ becomes negative. As before, either $\cE(\gamma_n)$ or $\cE(\gamma_n^\sharp)$ is strictly less than $I_\alpha + I_\beta^\infty$. Therefore, $I_{\alpha + \beta} < I_{\alpha} + I_{\beta}$. \\

Let us now assume that $\epsilon_F' = 0$. Then, there exists $\Psi \in H^2(\R^3, \C^2)$ such that $\| \Psi \|_{L^2} = 1$, $H^\infty_{\gamma_0'} \Psi = 0$ and $\gamma_0' \Psi = \mu \Psi$ with $\mu > 0$. Then, for $0 < \eta < \mu$, we introduce $\gamma_{\eta} = \gamma_0 + \eta | \Phi_{N_2+1} \ket \ \bra \Phi_{N_2+1} |$ and $\gamma'_{\eta} = \gamma_0' - \eta | \Psi \ket \ \bra \Psi |$, so that $\gamma_{\eta} \in \cP_{\alpha + \eta}$ and $\gamma'_\eta \in \cP_{\beta - \eta}$. Moreover,
\[
	\cE(\gamma_\eta) = \cE(\gamma_0) + 2 \eta \epsilon_{N_2+1} + o(\eta) = I_\alpha + 2 \eta \epsilon_{N_2+1} + o(\eta)
\]
and
\[
	\cE^\infty( \gamma'_\eta) = \cE^\infty(\gamma_0') + o(\eta) = I^\infty_\beta + o(\eta).
\]
Using the facts that $\gamma_0 + \eta | \Phi_{N_2+1} \ket \ \bra \Phi_{N_2+1} | \in \cP_{\alpha + \eta}$ and $\gamma_0' - \eta | \Psi \ket | \bra \Psi | \in \cP_{\beta - \eta}$, it holds that
\[
	I_{\alpha + \beta} \le I_{\alpha + \eta} + I^\infty_{\beta - \eta} \le \cE(\gamma_\eta) + \cE^\infty(\gamma'_\eta) \le I_\alpha + I^\infty_\beta + 2 \eta \epsilon_{N_2+1} + o(\eta).
\]
Because $\epsilon_{N_2+1} < 0$, for $\eta$ small enough, the left hand side is strictly less that $I_\alpha + I_\beta^\infty$, which concludes the proof.

\end{proof}


%


\section*{Acknowledgments}

I am grateful to E. Canc\`es for discussions and help throughout this work. This work was partially supported by the ANR MANIF.


\section*{References}

\bibliographystyle{unsrt.bst}
\bibliography{Art_ExistenceMin_LSDA_ArXiV}
\end{document}